\documentclass[journal]{IEEEtran} 

\usepackage[dvips]{color}
\usepackage{epsf}
\usepackage{times}
\usepackage{epsfig}
\usepackage{graphicx}
\usepackage{float} 
\usepackage{amsmath}
\usepackage{amssymb}
\usepackage{amsxtra}
\usepackage{amsthm}
\usepackage{bbm}

\usepackage{here}
\usepackage{rawfonts}
\usepackage{times}
\usepackage{url}
\usepackage{cite}
\usepackage{comment}
\usepackage[utf8]{inputenc}
\usepackage{caption}
\usepackage{subcaption}
\usepackage[normalem]{ulem}

\usepackage{pstricks}
\usepackage{algorithm}
\usepackage{algpseudocode}
\usepackage{lipsum}

\newtheorem{theorem}{\bf Theorem}

\IEEEoverridecommandlockouts 

\begin{document} 
\title{\huge
    Privacy Preserving Semantic Communications in Wireless Edge Networks with Vision Language Models
} 
\author{ 

    Haoran Chang, \IEEEmembership{Graduate Student Member, IEEE},
    Mingzhe Chen, \IEEEmembership{Senior Member, IEEE},
    and Qianqian Zhang, \IEEEmembership{Member, IEEE}

}

\IEEEaftertitletext{\vspace{-1\baselineskip}}
\maketitle
\begingroup
\renewcommand{\thefootnote}{}

\footnotetext{Haoran Chang and Qianqian Zhang are with the Department of
Electrical and Computer Engineering, Rowan University, Glassboro, NJ 08028, USA
(E-mail: \url{changh35@rowan.edu}; \url{zhangqia@rowan.edu}).}

\footnotetext{Mingzhe Chen is with the Department of Electrical and Computer
Engineering, University of Miami, Coral Gables, FL 33146, USA
(E-mail: \url{mingzhe.chen@miami.edu}).  He is also with the Frost Institute for Data Science and Computing, University of Miami,  FL, USA.}

\endgroup

\begin{abstract} 
Semantic communication has emerged as a promising paradigm for next-generation wireless systems by improving communication efficiency through the transmission of high-level semantic features rather than raw bits. 
Recent advances in  semantic communications have enabled  multiple  devices to collaboratively transmit complementary semantic information, thereby enhancing the semantic understanding and reasoning at the receiver. 
Meanwhile, the increasing availability of multi-modal data has further enriched the semantic representations and expanded the scope of wireless applications. 
However, the involvement of  collaborative devices and multimodal transmissions substantially enlarges the attack surface and introduces new privacy risks, as sensitive information may be exposed through both inter-device semantic fusion and cross-modal semantic leakage during transmission.
To address this challenge, we propose a privacy-preserving  semantic communication framework to protect privacy-sensitive information in the wireless edge communication system. 
Leveraging a vision-language model (VLM), the proposed framework extracts textual semantics from the input image  and identifies privacy-sensitive entities using a privacy database maintained exclusively at an edge server. 
Prior to image transmission, each edge device removes the identified private regions while preserving  useful semantic content to support unknown downstream tasks. 
Upon reception, the server reconstructs the removed regions from the received masked image, guided by textual embeddings and semantic priors learned by the VLM.
The framework is designed against a strong model-aware adversary that can eavesdrop on  wireless transmissions  and has full knowledge of the edge-device model parameters, but has no access to any server-side data. 
Simulation results show that the proposed framework reduces privacy leakage to the model-aware adversary by more than $50\%$ compared with the state-of-art semantic communication scheme without privacy protection. 
Moreover, the authorized server achieves a $48\%$ improvement in perceptual reconstruction quality over the adversary when recovering the removed private regions.
The proposed framework also effectively suppresses cross-device semantic redundancy, with the estimated mutual information between transmitted representations approaching $0$ bit, indicating that redundant semantic information is rarely transmitted across devices. 

\end{abstract}  
 

\section{Introduction}

In recent years, the rapid expansion of wireless edge network applications and the growing volume of data have placed increasing demands on efficient transmission under limited communication resources. In this context, semantic communication has attracted significant attention. Unlike traditional communication systems that focus on the reliable delivery of bit sequences, semantic communication aims to convey the underlying meaning of the transmitted information \cite{xie2021deep}. Based on the received semantic information, the receiver is able to perform the downstream tasks or recover the original content. As a result, semantic communication offers a promising solution for wireless edge communication systems. 

To realize the promise of semantic communications, deep neural networks (DNNs) have become a key enabling technology.
A DNN-based framework  was first  proposed  in \cite{xie2021deep} to jointly train semantic and channel encoders/decoders for semantic extraction and robust text transmission over physical channels. 
For image transmissions,   variational autoencoder (VAE) \cite{hu2023robust} and vision transformer \cite{yoo2022real}  have been employed for image feature extraction and  reconstruction. 
However, most early works primarily relied on single-modal representations with limited semantic reasoning and cross-modal  understanding. 
To address these challenges, some recent works  \cite{cicchetti2024language} and \cite{zhao2024lamosc} leveraged  the 
large-language model (LLM) 
to enhance the image  understanding and reconstruction. 
In these frameworks, the transmitter  converts input images into textual descriptions, 
while the receiver reconstructs the image   from texts and transmitted semantic representations  to achieve high reconstruction quality. 
Nevertheless, existing studies  \cite{li2017person,zhao2021weakly,jiang2024survey,agyeman2024decentralized} have shown that textual descriptions alone can reveal sufficient identity-related semantic information to enable image retrieval, leading to privacy leakage. 
These findings highlight that, despite the remarkable semantic capabilities of multimodal frameworks, protecting privacy-sensitive information remains a fundamental challenge for semantic communications. 


Beyond  multi-modal presentations, recent works in \cite{xie2022task} and \cite{shao2023task} have extended semantic communications  from  point-to-point transmission into  multi-device systems, where multiple devices collaboratively transmit semantic information about the same event. 
By exploiting diverse sensing perspectives, these frameworks overcome  the  sensing limitation of  individual devices and improve the performance of downstream tasks. 
However, the semantic information generated by different devices is often partially redundant, resulting in   unnecessary communication overhead. 
More critically, inter-device semantic correlation substantially enlarges the attack surface, allowing adversaries to infer sensitive information by jointly analyzing transmissions from multiple devices or launching model inversion attacks on compromised edge devices with limited security capabilities \cite{yao2026privguardinfer,khowaja2026post,dong2022privacy}. 
Consequently, secure semantic communications in the wireless edge network require not only efficient suppression of redundant information but also effective protection against semantic leakage. 


\subsection{Related Work} 

\subsubsection{Multimodal-enabled Semantic Communication}

Vision-language model (VLM) have recently become a key technology for multi-modal semantic communication.  
By representing visual content as natural language, VLMs provide semantically interpretable representations that improve communication efficiency and robustness to channel impairments  \cite{VLM-first}. 
At the receiver, the transmitted textual semantics are used to condition a generative model for image reconstruction
However, textual descriptions alone often fail to preserve fine-grained visual details, resulting in limited reconstruction quality. 
To address this limitation, \cite{cicchetti2024language} and \cite{zhao2024lamosc} additionally transmit latent features to enhance reconstruction fidelity. 
While these works demonstrate the effectiveness of VLMs for multimodal semantic communications, they primarily focus on communication efficiency and reconstruction quality, overlooking the privacy risks associated with semantic representations.


\subsubsection{Privacy-preserving Semantic Communication}

Existing  privacy-preserving semantic communication approaches can be categorized into differential privacy, cryptographic protection, and sensitive information suppression \cite{meng2025survey}. 
Differential privacy  protects semantic representations by introducing calibrated perturbations. 
For example, \cite{chen2024enhancing} separates  semantic features into private and non-private latent representations and perturbs only the private component before transmission, 
while \cite{seif2024collaborative} employs differential privacy to protect features during collaborative inference across multiple devices. 
Although these methods achieve a favorable tradeoff between privacy preservation and task performance, they primarily consider conventional wireless eavesdroppers and overlook potential learning-based attacks. 

Cryptographic approaches, on the other hand, secure semantic representations through learned encryption mechanisms. 
Neural adversarial cryptography was first introduced in \cite{abadi2016learning} and later adopted for semantic communications in \cite{luo2023encrypted}, where a pre-shared secret key enables legitimate receivers to recover the transmitted information while preventing unauthorized decoding. 
However, these methods rely on secure key establishment and management, which remain challenging in practice. 
Finally, sensitive information suppression removes or obfuscates privacy-sensitive content before transmission.  Representative examples include suppressing sensitive semantic features through adversarial learning \cite{wang2023privacy} and removing user-specified objects from shared images \cite{bai2023precise}.
However, these methods are largely task-specific, with semantic encoder and decoder optimized to predefined tasks, and fail to support reconstruction of the original data. 
Thus, this limitation restricts the utility of the received data in task-unaware scenarios.

\subsubsection{Multi-Device Information Bottleneck} 

Multiple devices often observe the same scene or object from different perspectives, resulting in redundant semantic information.  To reduce address this issue, the information bottleneck principle has  been extended to multi-device settings \cite{aguerri2019distributed}. 
This work established a theoretical framework for balancing communication efficiency and task information. 
Building upon this framework, \cite{shao2023task} leveraged the  information bottleneck for multi-device task-oriented communication, where each device encodes its local observations and transmits compact semantic representations to a central server for inference. 
Similarly, \cite{wei2023federated} incorporated the information bottleneck  into  federated learning, where the devices are trained jointly with a server. 
Although these works demonstrate the effectiveness of information bottleneck learning for distributed inference, they primarily aim to preserve task-relevant information and do not explicitly address semantic redundancy across multiple devices. 
Nevertheless, existing studies provide valuable insights for designing efficient multi-device semantic  systems.

\subsection{Main Contributions}

In this work, a privacy-preserving semantic communication framework is proposed  to jointly protect sensitive information and reduces cross-device  semantic redundancy in  wireless edge networks. Main contributions  are summarized as follows:

\begin{itemize}
    \item A VLM-enabled  semantic communication framework is proposed for wireless edge systems. 
    A text-based privacy identification module  detects privacy-sensitive entities by matching image descriptions against a server-side privacy database. 
    The identified image regions are removed before transmission and reconstructed only at the authorized receiver using a VLM, preventing unauthorized receivers from recovering the protected content.
    \item To prevent information leakage, we propose an encrypted semantic text transceiver. 
    Instead of relying on pre-shared secret keys, the encryption key is generated from reciprocal wireless channel characteristics between the legitimate edge device and server, eliminating the need for prior key distribution. 
    As a result, unauthorized receivers cannot correctly decode intercepted semantic messages, even with full knowledge of the edge-device model parameters.


    \item We propose the information bottleneck framework to suppress redundant semantic information across multiple edge devices. 
    A variational approximation  is developed to obtain a tractable optimization problem of minimizing the cross-device mutual information  while preserving task-relevant semantic content. 

    \item Extensive simulation results demonstrate that the proposed framework achieves high-quality image reconstruction at the authorized receiver while effectively preventing model-aware adversaries from recovering privacy-sensitive content. Furthermore, the proposed distributed semantic information bottleneck significantly reduces cross-device semantic redundancy, thereby improving semantic communication efficiency.
    



\end{itemize}

The remainder of this paper is organized as follows. 
Section II presents the  system model, including the privacy detection and protection procedures, the model-aware adversary, the  semantic information bottleneck, and the problem formulation. 
Section III details the proposed solution. 
Section IV presents the simulation evaluation of the proposed framework. 
Finally, Section V concludes the paper.

\section{System Model and Problem Formulation}\label{sysModel_proFormulation}

\begin{figure*}[t]
    \centering
    \includegraphics[width=0.85\textwidth]{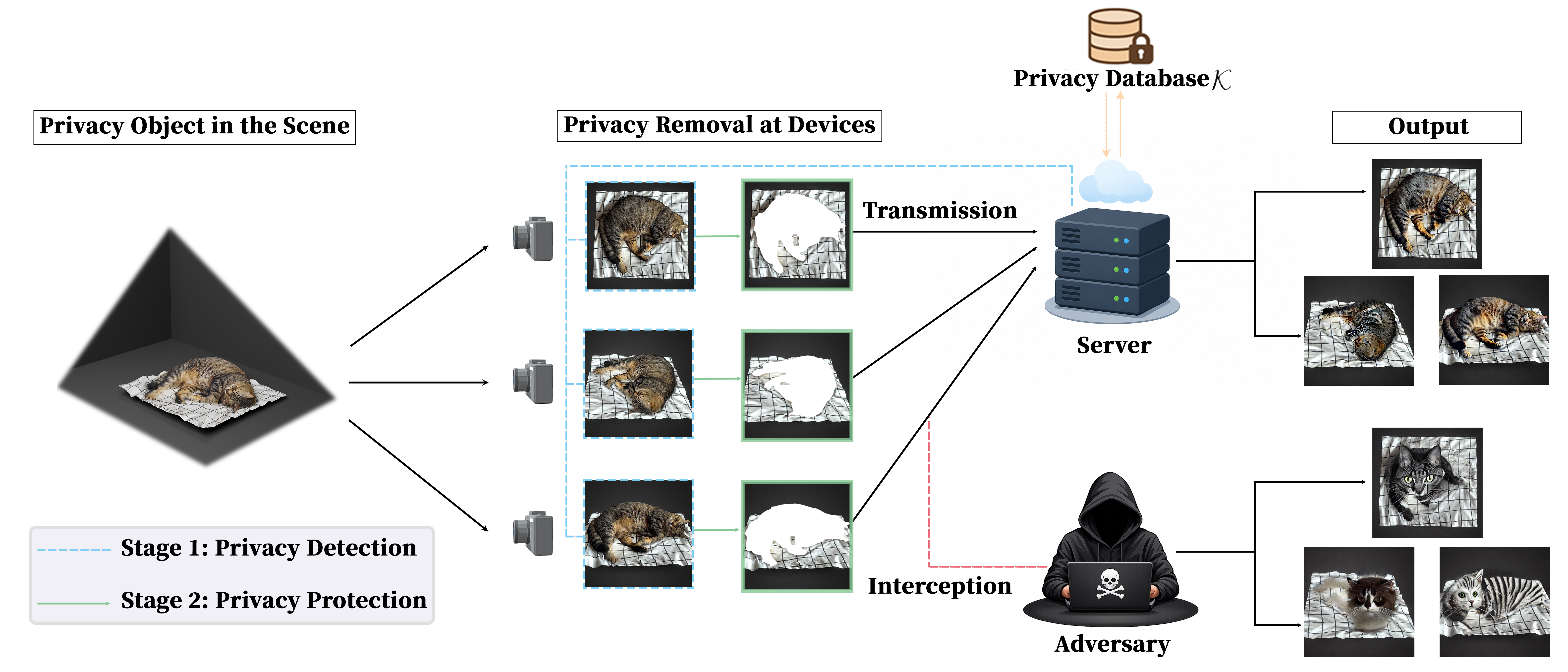} \vspace{-0.2cm}
    \caption{An example of the proposed privacy-preserving framework. The devices remove sensitive content before transmission. The server reconstructs the privacy object, while the adversary cannot reliably recover its identity.}
    \label{fig:example}
\end{figure*}

We consider a wireless system comprising a group of edge devices $\mathcal{N}$ and a server. Each device $n \in \mathcal{N}$ captures an image of an object of interest 
from a different viewpoint and transmits the data to the server. 
In this context, the object of interest often constitutes sensitive information, which unauthorized entities may attempt to intercept and extract. 
To safeguard privacy, a semantic protection framework is employed.  
Specifically, a privacy dataset $\mathcal{K}$ is maintained exclusively at the server to characterize and identify sensitive content within the captured images. 
The edge devices, however, have no prior knowledge of the sensitive object and cannot directly access $\mathcal{K}$. Instead, each device relies on semantic guidance provided by the server using text-based instructions. 
Based on this guide, each device performs on-device filtering to remove sensitive information from captured images prior to transmission, ensuring that no privacy-related content is included in the transmitted data. 
For example, if the privacy object is a cat named Vicky, as shown in Fig.~\ref{fig:example}. After privacy removal, only the server can semantically recover the cat’s identity, while the adversary produces inconsistent outputs. Although the adversary may infer that the masked image is a cat, it cannot clearly identify the color or fur pattern, thus preserving the identity information about Vicky. 
To implement this mechanism, we first introduce the privacy detection module, privacy protection framework, and adversary model, followed by the performance metrics and  problem formulation. 

\begin{figure*}[t]
    \centering
    \makebox[\textwidth][l]{
        \hspace*{-3.20cm}
        \includegraphics[width=1.19\textwidth]{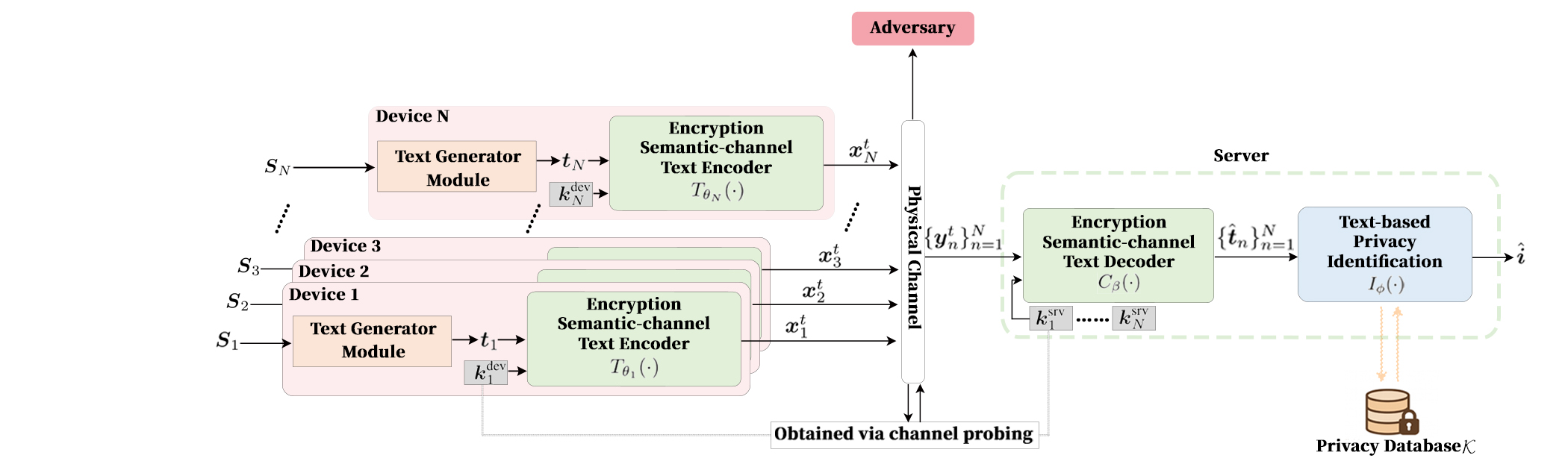}
    }
    \vspace{-0.5cm}
    \caption{Privacy detection module to identify sensitive content via text descriptions.}
    \label{fig:system_model_text}
\end{figure*}

\subsection{Privacy Detection}  
The purpose of privacy detection is to identify sensitive information in the captured images of each edge device with the assistance of the server. 
Let 
$\mathcal{S}=\{\boldsymbol{S}_n\}_{n=1}^{N}$ 
denote a multi-view image set captured by  devices, where $\ \boldsymbol{S}_n \in$ $\mathbb{R}^{C \times H \times W}$ denotes the image at device $n \in \mathcal{N}$, and $\mathit{C}$, $\mathit{H}$, and $\mathit{W}$ represent the channel, height, and width, respectively. 
As shown in Fig. \ref{fig:system_model_text}, each device  employs a Bootstrapping Language-Image Pre-training (BLIP) model \cite{li2022blip } to generate a textual description of the captured image,  as
\begin{equation}
\boldsymbol{t}_n = f_{\mathrm{BLIP}}(\boldsymbol{S}_n)= \left[ w_{n,1},\, w_{n,2},\, \ldots,\, w_{n,L} \right] ,
\label{equ:s2t}
\end{equation}
where $w_{n,l}$ represents the $l$-th word generated by device $n$. The resulting token sequence $\boldsymbol{t}_n $ serves as a compact semantic representation of the image, which is then transmitted to the server for privacy detection via a frequency division multiple access approach. 

To  protect textual information, a physical-layer encryption module is employed, where the channel state information (CSI) between each device and the server is first estimated using pilot signaling and then used for text-massage protection.  
Assuming a full-duplex  setting, the CSI observed at  device $n$ and the server are denoted by $\mathbf{h}_n^\text{dev}$ and $\mathbf{h}_n^\text{srv}$, respectively. 
Due to the channel reciprocity, the generated  secret keys $\boldsymbol{k}_n^\text{dev} = f_{\mathrm{PLK}}(\mathbf{h}_n^\text{dev}) \in \mathbb{R}^d$ at device $n$ and $\boldsymbol{k}_n^\text{srv} = f_{\mathrm{PLK}}(\mathbf{h}_n^\text{srv})\in \mathbb{R}^d$ at the server are highly correlated \cite{mathur2008radiotelepathy}, 
where  $d$ is the embedding dimension of the textual token representation and $f_{\mathrm{PLK}}(\cdot)$ is the physical-layer key generation function \cite{zeng-plkg-2015}.

Each device then encodes the text data using an encryption encoder $T_{\theta_n}(\cdot)$ with $\theta_n$ as the trainable parameters of device $n$, and $\boldsymbol{\Theta} = [\theta_1,   \ldots, \theta_N]$ represents the  parameters for all devices. The transmitted  signal from  device $n$ is given by
\begin{equation}
    \boldsymbol{x}_n^{t} = T_{\theta_n}(\boldsymbol{t}_n, \boldsymbol{k}_n^\text{dev}). 
    \label{equ:t2x}
\end{equation}
The received signal from device $n$ at the server is 
\begin{equation}
    \boldsymbol{y}_{n}^{t}
    =
    \boldsymbol{h}_{n}
    \odot
    \boldsymbol{x}_{n}^{t}
    +
    \boldsymbol{n},
    \label{equ:x2yt}
\end{equation}
where $\boldsymbol{h}_n$ denotes the flat-fading channel  from device $n$ to the server, $\odot$ denotes element-wise multiplication, and $\boldsymbol{n} \sim \mathcal{CN}(0, \sigma_n^{2} \mathbf{I})$ is additive white Gaussian noise.
The server then applies a decoder $C_{\beta}(\cdot)$  with parameters $\beta$ to recover the text: 
\begin{equation}
\hat{\boldsymbol{{t}}}_n = C_{\beta}({\boldsymbol{y}}_{n}^{t}, \boldsymbol{k}_n^\text{srv}).
\label{equ:yk2t}
\end{equation}

After aggregating the reconstructed semantic texts $\{ \boldsymbol{\hat{t}}_1, \cdots, \boldsymbol{\hat{t}}_N\}$ from all  devices,  the server  applies a text-based privacy identification module $I_{\phi}(\cdot)$, parameterized by $\phi$, to determine whether the  captured image at each device contains any sensitive content defined in the privacy database $\mathcal{K}$.  
Here, the privacy database $\mathcal{K} = \left\{ \mathcal{S}^{*}_i \right\}_{i=1,\cdots,K}$  is constructed from predefined privacy entities specified by the server,  
where
$\mathcal{S}^{*}_i = \{ \boldsymbol{S}^{*}_{i,m} \}_{m=1,\cdots,M}$ is a set of $M$ sample images associated with each privacy identity $i$. 
The privacy dataset is typically established during system initialization, device registration, or deployment via highly secure links. 

At the end of the privacy detection stage, if no sensitive object is identified, the corresponding image is transmitted   to the server without further processing.  
Otherwise, a  privacy identity label $\hat{i} = I_{\phi}(\{\hat{\boldsymbol{t}}_n\}_{n=1}^{N})$ is predicted based on the multi-view textual descriptions, and 
a feedback signal is  broadcast to all devices to indicate whether privacy protection  is required. 

\begin{figure*}[t]
    \centering
    \makebox[\textwidth][l]{
        \hspace*{-3.55cm}
        \includegraphics[width=1.2\textwidth]{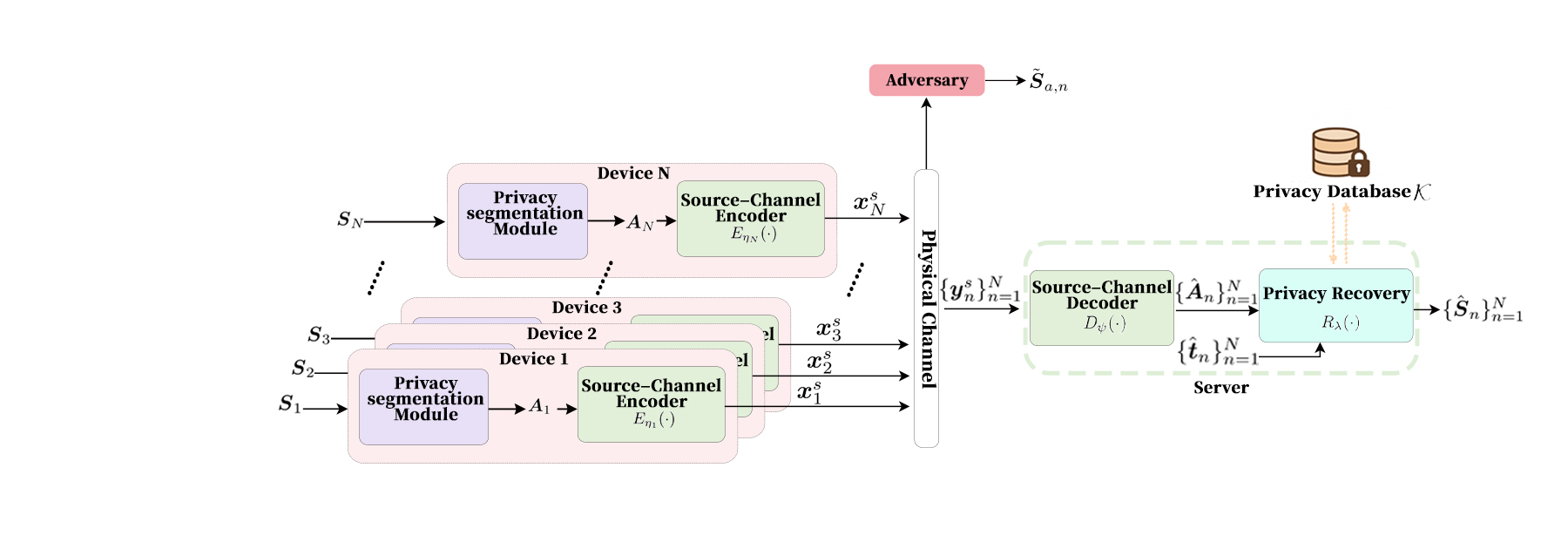}
    }
    \vspace{-1.2cm}
    \caption{Privacy protection module to remove sensitive information  prior to transmission and recover the image at the server using a vision-language model.}
    \label{fig:system_model_image}
\end{figure*}

Upon detection of sensitive content, each device  applies a segmentation module to separate the privacy region from the background in the captured image and  remove the corresponding area accordingly: 
\begin{equation}
\boldsymbol{A}_n = f_{\mathrm{SEG}}(\boldsymbol{S}_n| \hat{i}),
\end{equation}
where $f_{\mathrm{SEG}}(\cdot)$ denotes a segmentation function \cite{kirillov2023segment} that generates a mask to suppress privacy-sensitive objects. 
Here, we consider a single-object scenario, where  $f_{\mathrm{SEG}}(\cdot)$  can be realized by a foreground–background segmentation, with the sensitive object treated as the foreground region and the background considered as non-sensitive\footnote{The proposed framework can be  extended to multi-object scenarios by decomposing a multi-object image into several single-object regions, and then the same privacy procedure is applied to mask sensitive objects individually. Finally,  all processed regions are merged to reconstruct the sanitized image.}. 
The output $\boldsymbol{A}_n \in \mathbb{R}^{C \times H \times W}$ is a sanitized image that masks sensitive content. 

As shown in Fig.~\ref{fig:system_model_image}, the sanitized image $\boldsymbol{A}_n$ is then encoded by a source-channel encoder $E_{\eta_n}(\cdot)$ into a transmit signal: 
\begin{equation}
\boldsymbol{x}_n^{s} = E_{\eta_n}(\boldsymbol{A}_n),
\end{equation}
where $\eta_n$ is the trainable parameter of device $n$ and $\boldsymbol{\eta} = [\eta_1, \eta_2, \ldots, \eta_N]$ collects the parameters of all devices. The received signal at the server is
\begin{equation}
    \boldsymbol{y}_n^{s}
    =
    \boldsymbol{h}_n
    \odot
    \boldsymbol{x}_n^{s}
    +
    \boldsymbol{n}.
\label{ys}
\end{equation}

Upon signal reception, the server applies a decoder $D_{\psi}(\cdot)$ with parameter $\psi$ to reconstruct the masked image $\hat{\boldsymbol{A}}_n\in \mathbb{R}^{C \times H \times W}$, via
\begin{equation}
    \hat{\boldsymbol{A}}_n = D_{\psi}( \boldsymbol{y}_n^{s} ).
\end{equation} 
The decoded output are then passed to a privacy recovery module $R_{\lambda}(\cdot)$, based on a vision–language model (VLM) with parameter $\lambda$, to recover the complete image as 
\begin{equation}
\hat{\boldsymbol{S}}_n = R_\lambda(\hat{\boldsymbol{A}}_n | 
\hat{i},
\hat{\boldsymbol{{t}}}_n,
\mathcal{K}).
\label{equ_recovery}
\end{equation}

\subsection{Adversary Model} 
We consider a model-aware adversary that has full access to the model parameters deployed on the edge devices. This threat model is well motivated in wireless edge systems, where the limited computational and security capabilities of edge devices make them more vulnerable to model leakage and compromise.  
In contrast, 
the server-side model parameters and privacy dataset $\mathcal{K}$ are assumed to be inaccessible,  as the edge server provides a substantially higher level of physical and cyber security. 
By exploiting the exposed edge models, the adversary can launch model inversion attacks to infer semantic information. 
We further assume that all edge devices share the same backbone architecture for the text encoder $T_{\theta_n}$ and  image encoder $E_{\eta_n}$, a common practice in edge computing platforms to reduce deployment and maintenance cost.
Thus, compromising a single edge device enables the adversary to recover semantic representations through model inversion. 

Specifically, the adversary collects textual samples $\boldsymbol{t}_a = [w_{a,1},   \ldots, w_{a,L}]$ from public datasets  and feeds them into the encoder $T_{\theta_n}(\cdot)$ to obtain the corresponding outputs, forming a large number of input–output pairs. 
Based on these pairs, the adversary then trains a surrogate decoder $D_{\beta_a}(\cdot)$ to minimize the cross-entropy  loss: 
$\mathcal{D}_{\mathrm{CE}}(p,q)
= -\sum_{l=1}^{L} 
p(w_l)\log\big(q(w_l)\big)$, 
where $p(w_l)$ denotes the ground-truth token probability induced by the input text  $\boldsymbol{t}$, and $q(w_l)$ denotes the predicted token probability produced by the decoder. Accordingly, the loss function is given as follows:
\begin{equation}
\mathcal{L}_{A}({\beta}_a)
= \mathcal{D}_{\mathrm{CE}}\!\left(
\boldsymbol{t}_a,\,
D_{{\beta}_a}\!\left(
T_{\theta_n}(\boldsymbol{t}_a)
\right)
\right).
\label{A}
\end{equation} 
Similarly, the adversary can intercept the privacy-removed image representations. 
Using the same model inversion strategy, the adversary feeds image samples from public datasets to the encoder $E_{\eta_n}(\cdot)$ to obtain latent representations, and train a surrogate image decoder $D_{\psi_a}(\cdot)$ to reconstruct the original visual content. 
In addition, the adversary employs a VLM module $R_{\lambda_a}(\cdot)$ with capability comparable to that of the server, but whose parameters are derived from a publicly available VLM without access to the privacy dataset. Consequently, the image recovered by the adversary is  
\begin{equation}
\tilde{\boldsymbol{S}}_{a,n}
= R_{\lambda_a}\!\left(
D_{\psi_a}\!\left(\boldsymbol{{y}}_{a,n}^{s}\right) |
D_{{\beta}_a}\!\left(\boldsymbol{{y}}_{a,n}^{t}\right)
\right).
\label{equ_adv}
\end{equation} 
However, without access to the privacy database $\mathcal{K}$, the adversary is limited to partial image reconstruction, with sensitive content either omitted or incorrectly inferred.

\subsection{Semantic Information Bottleneck} \label{DistributedInformationBottleneck} 

Under the  multi-view setting, different edge devices may capture either complementary or redundant semantic information from diverse viewpoints. 
Although complementary semantic cues improve downstream privacy identification and image reconstruction, redundant cross-view information reduces  communication efficiency. 
To address this issue, a multi-device semantic information bottleneck (SIB) loss is introduced to preserve task-relevant semantic information while suppressing redundant cross-view correlations. Accordingly, the SIB loss is formulated as
\begin{equation} 
\mathcal{L}_{\mathrm{SIB}}(\beta,\boldsymbol{\Theta})
=
H(i | \boldsymbol{y}_{1:N}^{t})
+
\gamma
\sum_{n=1}^{N}
\sum_{m=n+1}^{N}
I(\boldsymbol{y}_{n}^{t};\boldsymbol{y}_{m}^{t}), 
\label{sib}
\end{equation}
where $H(i | \boldsymbol{y}_{1:N}^{t})$ represents the conditional entropy of the ground-truth  identity label $i$, 
given all  text messages. 
It characterizes the uncertainty of privacy identification after multi-message fusion, and its minimization encourages   $\boldsymbol{y}_{1:N}^{t}$ to retain sufficient  information to tell accurate  identification.
Meanwhile,  $\sum_{n} \sum_{m} I(\boldsymbol{y}_{n}^{t};\boldsymbol{y}_{m}^{t})$
measures the  mutual  semantic information  across different devices' viewpoints. 
Minimizing this term suppresses   semantic redundancy and promotes complementary  representations across devices.  
Finally,  $\gamma >0$ controls the tradeoff between preserving task-relevant semantic information and reducing cross-view redundancy. 


\subsection{Problem Formulation} \label{problemFormulation}  


The goal of the proposed privacy-preserving semantic communication framework is to minimize the image reconstruction distortion at the server, while limiting privacy leakage to the adversary and reducing redundant cross-view semantic information in the  multi-device communication system. 
In the first stage of privacy detection, the accuracy of the reconstructed text $\hat{\boldsymbol{t}}_n$ and the privacy identity label $\hat{i}$ will directly affect the reconstruction performance of $\boldsymbol{\hat{S}}_n$ in the second stage, as shown in  (\ref{equ_recovery}).  
Therefore, the reconstruction performance at the server can be quantified as
\begin{equation}
    O_{s} = \frac{1}{N}\sum_{n=1}^N  \left\|\boldsymbol{S}_n - \hat{\boldsymbol{S}}_n(\beta,\phi,\psi,\lambda,\theta_n,\eta_n)\right\|_2^2.
\label{O_s}
\end{equation}
Meanwhile, according to (\ref{equ_adv}), the performance of the privacy detection module also affects the adversary's reconstruction quality of $\tilde{\boldsymbol{S}}_{a,n}$, which can be characterized by 
\begin{equation}
    O_a = \frac{1}{N}\sum_{n=1}^N  \left\|\boldsymbol{S}_n - \tilde{\boldsymbol{S}}_{a,n}(\beta,\phi,\theta_n,\eta_n)\right\|_2^2.
\end{equation}
Therefore, the overall objective function is 
\begin{equation}
    O  = O_s(\beta,\phi,\psi,\lambda,\boldsymbol{\Theta},\boldsymbol{\eta}) - O_a(\beta,\phi,\boldsymbol{\Theta},\boldsymbol{\eta}) +  \mu \mathcal{L}_{\mathrm{SIB}}  (\beta, \boldsymbol{\Theta}),
\label{O}
\end{equation} 
where $\mu$ is a regularization weight. 
Consequently, the privacy-preserving semantic communication problem can be formulated as: 
\begin{subequations}\label{equs_opt}  
	\begin{align}
		\min_{ \beta,\phi,\psi,\lambda,\boldsymbol{\Theta},\boldsymbol{\eta}  } \quad &       O (\beta,\phi,\psi,\lambda,\boldsymbol{\Theta},\boldsymbol{\eta}) & \label{equ_obj1}\\ 
		\textrm{s. t.} \quad  
        & \frac{\left\| \boldsymbol{h}_n \boldsymbol{x}_n^{t}(\theta_n) \right\|^{2} }{B\,\sigma_n^{2}} \ge \tau_{\mathrm{thre}}^{(t)}, & \forall n, \label{cons_snr_t} \\
		& \left\| \boldsymbol{x}_n^{t}(\theta_n) \right\|^{2} \le P_{\max}^{(t)}, & \forall n, \label{cons_power_t} \\
		& \frac{\left\| \boldsymbol{h}_n \boldsymbol{x}_n^{s}(\beta,\phi,\theta_n,\eta_n) \right\|^{2} }{ B\,\sigma_n^{2} } \ge \tau_{\mathrm{thre}}^{(s)}, & \forall n, \label{cons_snr_i} \\
        & \left\| \boldsymbol{x}_n^{s}(\beta,\phi,\theta_n,\eta_n) \right\|^{2} \le P_{\max}^{(s)}, & \forall n, \label{cons_power_i} \\ 
        & \left\| \boldsymbol{k}^{\mathrm{dev}}_n - \boldsymbol{k}^{\mathrm{srv}}_n \right\|^{2} \le \epsilon_k, & \forall n,  \label{cons_key}
	\end{align} 
\end{subequations}
where constraints (\ref{cons_snr_t}) and (\ref{cons_snr_i})  ensure that the received signal-to-noise ratio (SNR) exceed the predefined thresholds $\tau_{\mathrm{thre}}^{(t)}$ and $\tau_{\mathrm{thre}}^{(s)}$,  
constraints (\ref{cons_power_t}) and ((\ref{cons_power_i})) limit the transmission powers of the two-stage transmission by  $P_{\max}^{(t)}$ and $P_{\max}^{(s)}$, respectively, 
and constraint (\ref{cons_key}) ensures that the key discrepancy between  each device $n$ and the server  in the physical-layer encryption stage is bounded   within  $\epsilon_k$.

The problem  (\ref{equs_opt}) is challenging to solve for three reasons.
Firstly, the  multi-view wireless setting introduces a tradeoff between semantic fidelity and communication efficiency. Since different devices capture the same object from different viewpoints, the transmitted representations contain both complementary and redundant semantic information. With the information bottleneck imposed by the wireless channel on semantic transmission, the solution must carefully preserve task-relevant information while suppressing  cross-view redundancy. 
Secondly, the multi-modal nature of the framework requires textual descriptions and visual information to jointly guide privacy identification and image reconstruction. Therefore, semantic consistency across different modalities must be preserved to ensure reliable recovery performance. 
Thirdly, the presence of a model-aware adversary complicates the framework design. The system must maintain sufficient semantic information for accurate reconstruction at the server, while simultaneously minimizing privacy leakage caused by intercepted textual signals, privacy-removed images, and exposed edge-side model parameters.


\section{Solution} \label{solution}


Although the system-level optimization problem  (\ref{equs_opt}) is  difficult to solve across two stages in an end-to-end manner, the optimization variables are largely decoupled within the constraints. 
Specifically, the privacy detection and protection modules, as well as the transmitter and receiver, operate sequentially,  resulting in relatively independent optimization processes.
Therefore, (\ref{equs_opt}) can be decomposed into several subproblems and optimized on a module-wise basis. 
An iterative algorithm is then developed to alternately solve the resulting subproblems until convergence. 

\subsection{Privacy Detection Module}

\subsubsection{Encrypted Semantic-Channel Text Encoder and Decoder}

We first optimize the encryption semantic-channel text (ESCT) encoder and  decoder in the privacy detection stage, as shown in Fig. \ref{fig:system_model_text}. 
Here, the aim is to improve the  reconstructed text quality, thus enhancing the server-side reconstruction objective in  (\ref{O_s}) for the next stage. 

Specifically, the encoder parameters $\boldsymbol{\Theta}$  and decoder parameter $\beta$ are optimized to minimize the discrepancy between the original text $\boldsymbol{t}_n$ and the reconstructed text $\hat{\boldsymbol{t}}_n$. The corresponding loss function is given by 
\begin{equation}
\mathcal{L}_K( \beta, \boldsymbol{\Theta}) = \frac{1}{N} \sum_{n=1}^{N} \mathcal{D}_{\mathrm{CE}} \left(\boldsymbol{t}_n, \hat{\boldsymbol{t}}_n \right).
\end{equation}
Meanwhile,  the adversary aims to maximize privacy leakage, by minimizing the adversarial loss in (\ref{A}). 
To optimize the decoder $D_{\beta_a}(\cdot)$ and reconstruct the original text from the intercepted signal $\boldsymbol{y}_{a,n}$, the adversarial reconstruction loss is 
\begin{equation}
\mathcal{L}_{\mathrm{adv}}(\beta_a | \boldsymbol{\Theta})
=
\frac{1}{N}
\sum_{n=1}^{N}
\mathcal{D}_{\mathrm{CE}}
\left(
\boldsymbol{t}_n,\,
D_{\beta_a}(\boldsymbol{y}_{a,n})
\right).
\end{equation}
Consequently, the privacy detection stage can be reformulated as an adversarial semantic-channel training problem. 
Rather than directly optimizing the image reconstruction quality in (\ref{equ_obj1}), this stage first optimizes the privacy detection modules to improve the quality of the reconstructed text. The resulting loss function is given by 
\begin{equation} \label{L_D}
\mathcal{L}_D(\boldsymbol{\Theta},\beta)
=
\mathcal{L}_K(\boldsymbol{\Theta},\beta)
-
\mathcal{L}_{\mathrm{adv}}(\boldsymbol{\Theta})
+  
\mu \mathcal{L}_{\mathrm{SIB}}  (\boldsymbol{\Theta}, \beta).
\end{equation} 
which optimizes the encrypted semantic-channel transceiver jointly with the  semantic information bottleneck.  


\begin{figure}[t] 
    \centering
    \hspace*{-0.65cm}
    \includegraphics[width=0.54\textwidth]{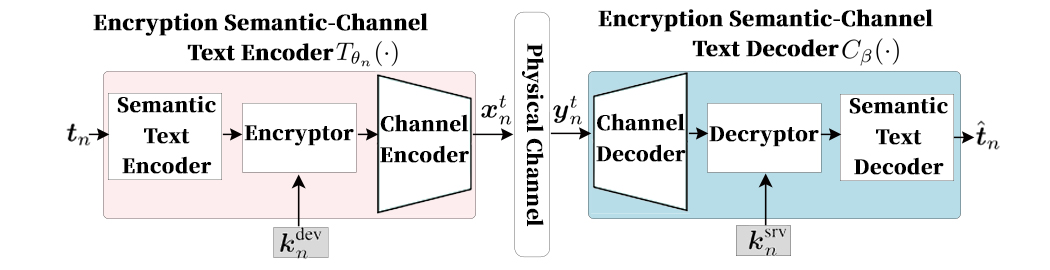}\vspace{-0.1cm}
    \caption{\label{fig_pr}Encryption semantic-channel encoder and decoder for text message in the privacy detection stage.} 
    \label{fig:ESCED}
    \vspace{-0.1cm}
\end{figure}

\begin{figure}[t]  
    \centering
    \includegraphics[width=0.47\textwidth]{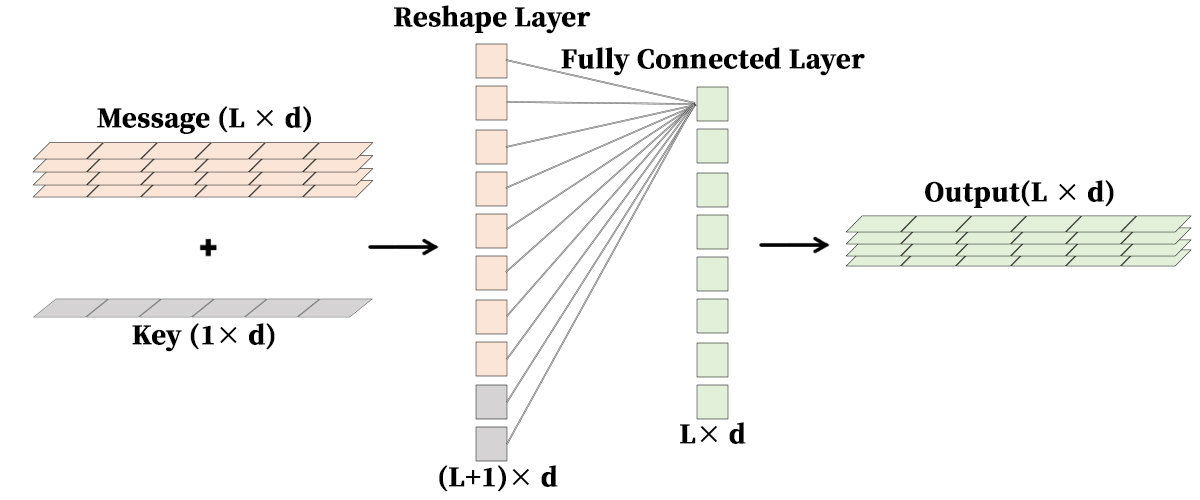}\vspace{-0.1cm}
    \caption{\label{fig_pr}Encryptor architecture that fuses semantic features and physical-layer secret keys through a fully connected layer.} 
    \label{fig:encryption}
\end{figure}

The overall architecture of the ESCT encoder and decoder is given in Fig. \ref{fig:ESCED}, where a transformer-based network is employed for semantic coding and decoding, with an autoencoder-based architecture  for channel coding. 
By jointly optimizing semantic and channel coding, the proposed framework reduces the  transmitted data amount while preserving the semantic meaning, to improve communication  efficiency and robustness, particularly in low signal-to-noise 
ratio  (SNR) environments  \cite{deepsc-2021}.  
Furthermore, to defend against model inversion attacks and protect semantic text information, physical-layer encryption and decryption modules are introduced. 
As shown in Figs. \ref{fig:ESCED} and \ref{fig:encryption}, a encryptor takes the secret key $\boldsymbol{k}_n^{\mathrm{dev}}$ and the semantic encoder output. 
After concatenation, the combined representation is flattened, transformed through a reshape layer, and projected back to the original embedding dimension. 
Correspondingly, the decryptor utilizes the channel decoder output and the secret key $\boldsymbol{k}_n^\text{dev}$   to reconstruct the textual information $\hat{\boldsymbol{{t}}}_n$.  

The training procedure for the ESCT transceiver is summarized in Algorithm~\ref{alg1}.

\begin{algorithm}[t]
\caption{\label{alg1}Train the Encrypted Text Semantic Transceiver}
\begin{algorithmic}[1]\small

\State \textbf{Input}: 
$\mathcal{S}$,
$D_{\beta_a}(\cdot)$,
$\tau_{\mathrm{thre}}^{(t)}$,
$P_{\max}^{(t)}$,
and $\epsilon_k$.

\While{$\mathcal{L}_D$ in (\ref{L_D})   has not converged}

        \State Sample a multi-view image set $\{\boldsymbol{S}_n\}_{n=1}^{N}$ from $\mathcal{S}$.

        \For{$n=1,\ldots,N$}
        \State \textbf{Channel Probing and Key Generation:}
            \State \hspace{0.5cm}
            Device $n$ and the server exchange pilot signals.
            \State \hspace{0.5cm}
            Estimate $\boldsymbol{h}^{\mathrm{dev}}_n$ and $\boldsymbol{h}^{\mathrm{srv}}_n$.
            \State \hspace{0.5cm}
            $f_{\mathrm{PLK}}(\boldsymbol{h}^{\mathrm{dev}}_n|\epsilon_k)\rightarrow \boldsymbol{k}^{\mathrm{dev}}_n$.
            \State \hspace{0.5cm}
            $f_{\mathrm{PLK}}(\boldsymbol{h}^{\mathrm{srv}}_n |\epsilon_k)\rightarrow \boldsymbol{k}^{\mathrm{srv}}_n$.

            \State \textbf{Transmitter:}
        
            \State \hspace{0.5cm} $f_{\mathrm{BLIP}}(\boldsymbol{S}_n)\rightarrow \boldsymbol{t}_n$ 
        
            \State \hspace{0.5cm}
            $T_{\boldsymbol{\theta}_n}
            (\boldsymbol{t}_n,\boldsymbol{k}^{\mathrm{dev}}_n| \tau_{\mathrm{thre}}^{(t)},
            P_{\max}^{(t)})
            \rightarrow
            \boldsymbol{x}_n^{t}$

            \State \hspace{0.5cm} Transmit $\boldsymbol{x}_n^{t}$ over the channel.
        
            \State \textbf{Receiver:}
        
            \State \hspace{0.5cm} Receive $\boldsymbol{y}_n^{t}$ via Eq. (\ref{equ:x2yt}).
        
            \State \hspace{0.5cm}
            $C_{\boldsymbol{\beta}}
            (\boldsymbol{y}_n^{t},\boldsymbol{k}^{\mathrm{srv}}_n| \tau_{\mathrm{thre}}^{(t)},
            P_{\max}^{(t)})
            \rightarrow
            \hat{\boldsymbol{t}}_n$.
        
            \State \textbf{Adversary:}
        
            \State \hspace{0.5cm}
            $D_{\boldsymbol{\beta}_a}
            (\boldsymbol{y}_{a,n}^{t})
            \rightarrow
            \hat{\boldsymbol{t}}_{a,n}$.
        
        \EndFor
        
        \State Compute the loss $\mathcal{L}_{D}$ by (\ref{L_D}).
        
        \State Update $\boldsymbol{\Theta}$ and $\boldsymbol{\beta}$ using gradient descent.

\EndWhile

\State \textbf{Output}: $\{T_{\boldsymbol{\theta}_n}(\cdot)\}_{n=1}^{N}$ and $C_{\boldsymbol{\beta}}(\cdot)$.
 
\end{algorithmic}
\end{algorithm}

\subsubsection{Text-based Privacy Identification Module}

Given the trained ESCT encoder and  decoder, the next component in Fig. \ref{fig:system_model_text}  is the privacy identification module $I_{\phi}(\cdot)$, which consists of a text encoder $G_{\phi}(\cdot)$ and a feature matching module. 

First, the reconstructed text  messages
$\{\hat{\boldsymbol{t}}_n\}_{n=1}^{N}$
from all devices  are jointly processed by 
\begin{equation}
    \{\hat{\boldsymbol{f}}_n \}_{n=1}^{N} = G_{\phi}( \{ \hat{\boldsymbol{t}}_n \}_{n=1}^{N} ),
\label{textencoder}
\end{equation} 
where $\hat{\boldsymbol{f}}_n\in \mathbb{R}^d$ is the identity-discriminative feature  extracted by the text encoder $G_{\phi}(\cdot)$. 
Meanwhile, each image $\{ \boldsymbol{S}^{*}_{i,m} \}_{m=1}^{M}$ in the privacy database is  converted into textual descriptions, using the same process as (\ref{equ:s2t}), which is subsequently encoded by the same encoder in (\ref{textencoder}) to obtain the corresponding feature representation   $\{ \boldsymbol{f}^{*}_{i,m} \}_{m=1}^{M}$. 

To determine the identity revealed in the text message, the query feature  $\{ \hat{\boldsymbol{f}}_n \}_{n=1}^{N}$ is matched against each reference feature  $\{ \boldsymbol{f}^{*}_{i,m} \}_{m=1}^{M}$ in a set-to-set manner. 
The pairwise similarity between the $n$-th query feature and the $m$-th reference feature of entity $i$ is computed as
\begin{equation}
\zeta_{n,m}^{(i)} = \hat{\boldsymbol{f}}_n^H \boldsymbol{f}^{*}_{i,m}. 
\end{equation}
Since reference images in the privacy database  originate from different viewpoints, it is unknown which one best matches the query feature of device $n$. Therefore, the maximum similarity over all reference features is selected for each query, and 
the matching score of entity $i$ is the average of the maximum similarities of all viewpoints. 
The identity with the highest matching score is selected as the output, i.e.,
\begin{equation}
\hat{i} = \arg\max_i \frac{1}{N} \sum_{n=1}^{N} \max_{m}  \zeta_{n,m}^{(i)}.
\end{equation}

To training $G_{\phi}(\cdot)$ for identity representations, contrastive learning is adopted to  cluster features of the same identity while separating those of different identities. 
For an anchor text message $\boldsymbol{t}_i$, the contrastive loss is defined as
\begin{equation}
\mathcal{L}_C(\boldsymbol{\phi}) = - \log \frac
    { 
        \exp \left(
        G_{\phi}(\boldsymbol{t}_i) 
        \cdot 
        G_{\phi}(\boldsymbol{t}_i^{+}) 
        / \tau \right)
    }
    {
        \sum_{j}\exp \left(
        G_{\phi}(\boldsymbol{t}_i)
        \cdot
        G_{\phi}(\boldsymbol{t}_j)
        / \tau \right)
    },
\label{L_C}
\end{equation}
where $\boldsymbol{t}_i^{+}$ denotes a positive sample of the same identity $i$, $\boldsymbol{t}_j$ represents all samples in the batch, and $\tau$ is a temperature parameter. 
Minimizing (\ref{L_C}) promotes intra-class compactness and inter-class separability in the learned embedding space. 
Here, the anchor text message  is generated from training images in $\mathcal{S}$ 
rather than from the privacy database $\mathcal{K}$. 
The corresponding training procedure is summarized in Algorithm~\ref{alg_reid}.

\begin{algorithm}[t]
\caption{\label{alg_reid}Train the Privacy Identification Module}
\begin{algorithmic}[1]\small

\State \textbf{Input}: $\mathcal{S}$.

\While{$\mathcal{L}_{C}$ in (\ref{L_C}) has not converged}

        \State Sample a mini-batch of images with identity labels from $\mathcal{S}$.

        \For{each image}
            \State Generate textual description:
            \State \hspace{0.5cm} $f_{\mathrm{BLIP}}(\boldsymbol{S}) \rightarrow \boldsymbol{t}$.
        
            \State Extract identity-discriminative feature:
            \State \hspace{0.5cm} $G_{\phi}(\boldsymbol{t}) \rightarrow \boldsymbol{f}$.
        \EndFor
    
        \State Compute the contrastive loss $\mathcal{L}_{C}$ by (\ref{L_C}).
    
        \State Update $\phi$ using gradient descent.

\EndWhile

\State \textbf{Output}: $G_{\phi}(\cdot)$.

\end{algorithmic}
\end{algorithm}

\subsubsection{Approximation  of multi-device SIB Loss}

A practical challenge of the proposed framework is that the SIB loss in (\ref{sib}), as well as the loss in (\ref{L_D}), contains conditional entropy and mutual information terms that are generally intractable to compute, making the optimization in Algorithm \ref{alg1} difficult.  
To enable efficient training and improve convergence, we adopt tractable approximations for these quantities. 

The first term of the conditional entropy is defined as $H(i | \boldsymbol{y}_{1:N}^{t}) = -\mathbb{E}_{p(i,\boldsymbol{y}_{1:N}^{t})}[\log p(i | \boldsymbol{y}_{1:N}^{t})]$ 
which is generally intractable, because the true posterior distribution  $p(i | \boldsymbol{y}_{1:N}^{t})$ is  unknown. 
To obtain a tractable surrogate objective, we introduce a variational posterior distribution  $q_{\phi}(i | \boldsymbol{y}_{1:N}^{t})$, parameterized by the text-based privacy identification module with parameter $\phi$,  to approximate the true posterior.  
Using the non-negativity property of the KL divergence, we first have $\mathbb{E}_{p(\boldsymbol{y}_{1:N}^{t})} \left[ D_{\mathrm{KL}} \left( p(i | \boldsymbol{y}_{1:N}^{t}) \parallel q_{\phi}(i | \boldsymbol{y}_{1:N}^{t}) \right) \right] \ge 0$, 
which yields  
\begin{equation}
-\mathbb{E}_{p(i,\boldsymbol{y}_{1:N}^{t})} \left[ \log q_{\phi} \left( i | \boldsymbol{y}_{1:N}^{t} \right) \right] - H(i | \boldsymbol{y}_{1:N}^{t}) \ge 0. 
\end{equation}
Therefore, the conditional entropy can be upper-bounded as 
\begin{equation}
H(i | \boldsymbol{y}_{1:N}^{t}) \le 
-\mathbb{E}_{p(i,\boldsymbol{y}_{1:N}^{t})}
\left[
\log q_{\phi}
\left(
i | \boldsymbol{y}_{1:N}^{t}
\right)
\right]
\triangleq \mathcal{L}_{\mathrm{task}}.
\label{L_task}
\end{equation}
Here, minimizing  $\mathcal{L}_{\mathrm{task}}$  encourages the variational posterior $q_{\phi}(i | \boldsymbol{y}_{1:N}^{t})$ to accurately predict the privacy identity $i$ from the reconstructed semantic representations, thereby reducing the uncertainty of privacy identification and indirectly minimizing the conditional entropy term. 
In practice, $q_{\phi}(i | \boldsymbol{y}_{1:N}^{t})$ corresponds to the output probability distribution of the text-based privacy identification network. 
Consequently, $\mathcal{L}_{\mathrm{task}}$  reduces to the standard cross-entropy classification loss, which can be efficiently optimized using stochastic gradient descent.

The second term of SIB loss is the mutual information 
\begin{equation}
    I(\boldsymbol{y}_n^{t};\boldsymbol{y}_m^{t})=
\mathbb{E}_{p(\boldsymbol{y}_n^{t},\boldsymbol{y}_m^{t})}
\left[
\log
\frac{
p(\boldsymbol{y}_n^{t},\boldsymbol{y}_m^{t})
}{
p(\boldsymbol{y}_n^{t})
p(\boldsymbol{y}_m^{t})
}
\right], 
\label{equ:mutual_info}
\end{equation} 
which measures the amount of shared semantic information extracted from  different views of two edge devices $n$ and $m$.
However, directly computing this quantity is generally intractable since the joint distribution $p(\boldsymbol{y}_n^{t},\boldsymbol{y}_m^{t})$, as well as the marginal distributions $p(\boldsymbol{y}_n^{t})$ and $p(\boldsymbol{y}_m^{t})$ are unknown. 
To obtain a tractable approximation, we adopt the variational Contrastive Log-ratio Upper Bound (vCLUB) estimator \cite{cheng2020club_arxiv} and introduce a variational conditional distribution $q_{\omega} (\boldsymbol{y}_m^{t}|\boldsymbol{y}_n^{t})$, parameterized by $\omega$, to approximate the dependency between the semantic representations from different views. 
In particular, $q_{\omega}$ is trained using positive and negative multi-view pairs, where given a mini-batch with $B$ objects, for the representation  $(\boldsymbol{y}_n^{t})^{(b)}$, a positive pair $ (\boldsymbol{y}_m^{t})^{(b)}$ is selected from a different  view  $m$ of the same object $b$, 
while a negative sample $(\boldsymbol{y}_m^{t})^{(b')}$
is selected from a different view $m$ of a different object $b'\neq b$. 
The corresponding training loss is given by
\begin{equation}
\begin{aligned}
\mathcal{L}_{\mathrm{vCLUB}}(\omega) = \frac{1}{B} \sum_{b=1}^{B} \Big[ 
& \log q_{\omega} \left( (\boldsymbol{y}_m^{t})^{(b)} | (\boldsymbol{y}_n^{t})^{(b)} \right) \\
& - \log q_{\omega} \left( (\boldsymbol{y}_m^{t})^{(b')} | (\boldsymbol{y}_n^{t})^{(b)} \right)
\Big],
\end{aligned}
\label{vclub}
\end{equation}
This objective encourages $q_{\omega}$ to assign a high likelihood to semantic representations corresponding to the same object while assigning a low likelihood to representations originating from different objects. Consequently, the learned conditional distribution captures the statistical dependency between correlated semantic representations.

Using the trained  conditional distribution $q_{\omega}(\boldsymbol{y}_m^{t}|\boldsymbol{y}_n^{t})$, the vCLUB estimator of mutual information \cite{cheng2020club_arxiv}  is given by 
\begin{equation}
\begin{aligned}
I_{\mathrm{vCLUB}}
(\boldsymbol{y}_n^{t};\boldsymbol{y}_m^{t})
=
&\,
\mathbb{E}_{p(\boldsymbol{y}_n^{t},\boldsymbol{y}_m^{t})}
\bigg[
\log q_{\omega}
(\boldsymbol{y}_m^{t}|\boldsymbol{y}_n^{t})
\bigg]
\\
&
-
\mathbb{E}_{p(\boldsymbol{y}_n^{t})p(\boldsymbol{y}_m^{t})}
\bigg[
\log q_{\omega}
(\boldsymbol{y}_m^{t}|\boldsymbol{y}_n^{t})
\bigg].
\end{aligned}
\label{eq:vclub_bound}
\end{equation} 
The following theorem shows that the estimation in (\ref{eq:vclub_bound}) provides an upper bound on the true mutual information in (\ref{equ:mutual_info}), under the condition that the variational conditional distribution $q_{\omega}$ must be well trained, i.e., $q_{\omega} (\boldsymbol{y}_m^{t}|\boldsymbol{y}_n^{t}) \approx p(\boldsymbol{y}_m^{t}|\boldsymbol{y}_n^{t}) $. 
\begin{theorem} \label{theorem1} 
Given $D_\mathrm{KL} (p(\boldsymbol{y}_n^{t},\boldsymbol{y}_m^{t}) ||
q_{\omega} (\boldsymbol{y}_m^{t}|\boldsymbol{y}_n^{t}) p(\boldsymbol{y}_n^{t}) ) \le D_\mathrm{KL} ( p(\boldsymbol{y}_n^{t}) p(\boldsymbol{y}_m^{t}) || q_{\omega} (\boldsymbol{y}_m^{t}|\boldsymbol{y}_n^{t}) p(\boldsymbol{y}_n^{t}))$ holds, 
the mutual information is upper-bounded by the vCLUB estimator, i.e., 
\begin{equation}
I(\boldsymbol{y}_n^{t};\boldsymbol{y}_m^{t})
\le
I_{\mathrm{vCLUB}}
(\boldsymbol{y}_n^{t};\boldsymbol{y}_m^{t}). 
\end{equation}
\end{theorem}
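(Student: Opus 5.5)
We write $q_{\omega}$ for $q_{\omega}(\boldsymbol{y}_m^{t}|\boldsymbol{y}_n^{t})$ and abbreviate $p(\boldsymbol{y}_n^{t},\boldsymbol{y}_m^{t})$ as $p_{nm}$ and the marginals as $p_n$, $p_m$. The plan follows the argument of \cite{cheng2020club_arxiv}: expand both KL divergences in the hypothesis into expectations of log-ratios, subtract one from the other, and recognize the vCLUB estimator (\ref{eq:vclub_bound}) minus the mutual information (\ref{equ:mutual_info}). The hypothesis then says exactly that this difference is nonnegative.

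\textbf{Left-hand divergence.} Since $p_{nm}=p_n\,p(\boldsymbol{y}_m^{t}|\boldsymbol{y}_n^{t})$, the common factor $p_n$ cancels inside the logarithm, giving
\begin{equation*}
D_{\mathrm{KL}}(p_{nm}\,\|\,q_{\omega}p_n)=\mathbb{E}_{p_{nm}}\big[\log p(\boldsymbol{y}_m^{t}|\boldsymbol{y}_n^{t})\big]-\mathbb{E}_{p_{nm}}\big[\log q_{\omega}\big].
\end{equation*}

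\textbf{Right-hand divergence.} Here $p_n$ again cancels, leaving
\begin{equation*}
D_{\mathrm{KL}}(p_np_m\,\|\,q_{\omega}p_n)=\mathbb{E}_{p_np_m}\big[\log p_m\big]-\mathbb{E}_{p_np_m}\big[\log q_{\omega}\big].
\end{equation*}
The first term equals $\mathbb{E}_{p_{nm}}[\log p(\boldsymbol{y}_m^{t})]$, because it depends only on $\boldsymbol{y}_m^{t}$ and both distributions have marginal $p_m$.

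\textbf{Combining.} Subtracting the left-hand divergence from the right-hand one gives
\begin{equation*}
\mathbb{E}_{p_{nm}}[\log q_{\omega}]-\mathbb{E}_{p_np_m}[\log q_{\omega}]-\mathbb{E}_{p_{nm}}\Big[\log\frac{p(\boldsymbol{y}_m^{t}|\boldsymbol{y}_n^{t})}{p(\boldsymbol{y}_m^{t})}\Big].
\end{equation*}
The first two terms are exactly $I_{\mathrm{vCLUB}}$. The last expectation equals $I(\boldsymbol{y}_n^{t};\boldsymbol{y}_m^{t})$ as defined in (\ref{equ:mutual_info}), since $p(\boldsymbol{y}_m^{t}|\boldsymbol{y}_n^{t})/p(\boldsymbol{y}_m^{t})=p_{nm}/(p_np_m)$. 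So the difference of the two divergences is $I_{\mathrm{vCLUB}}-I$, and the hypothesis that it is $\ge 0$ gives the claim.

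\textbf{Main obstacle.} The only delicate point is well-definedness. All expectations must be finite, and $q_{\omega}$ must be positive wherever $p_{nm}$ and $p_np_m$ have mass, so that both divergences are finite and the subtraction is legitimate rather than $\infty-\infty$. I would state this support and integrability condition explicitly as a standing assumption; everything else is routine bookkeeping.
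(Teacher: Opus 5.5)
Your proposal is correct and is essentially the paper's argument. The paper starts from the gap $I_{\mathrm{vCLUB}}-I$ and rearranges it into $D_{\mathrm{KL}}(p(\boldsymbol{u})p(\boldsymbol{v})\|q_{\omega}p(\boldsymbol{u}))-D_{\mathrm{KL}}(p(\boldsymbol{u},\boldsymbol{v})\|q_{\omega}p(\boldsymbol{u}))$, implicitly using the same marginal identity $\mathbb{E}_{p_{nm}}[\log p_m]=\mathbb{E}_{p_np_m}[\log p_m]$; you run the identical computation in the opposite direction, and your explicit finiteness and support caveat is a sensible addition that the paper leaves unstated.
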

\begin{proof}
For notational simplicity, let
$\boldsymbol{u}=\boldsymbol{y}_n^{t}$ and
$\boldsymbol{v}=\boldsymbol{y}_m^{t}$. 
The gap between  vCLUB and the true mutual
information is  
\begin{align}
\Delta
&=I_{\mathrm{vCLUB}}(\boldsymbol{u};\boldsymbol{v}) -I(\boldsymbol{u};\boldsymbol{v})  \notag \\
&=\mathbb{E}_{p(\boldsymbol{u},\boldsymbol{v})}\left[\log q_{\omega}(\boldsymbol{v}|\boldsymbol{u})\right] -\mathbb{E}_{p(\boldsymbol{u})p(\boldsymbol{v})}\left[\log q_{\omega}(\boldsymbol{v}|\boldsymbol{u})\right] \notag \\&\quad-\mathbb{E}_{p(\boldsymbol{u},\boldsymbol{v})}\left[\log p(\boldsymbol{v}|\boldsymbol{u})-\log p(\boldsymbol{v})\right] \notag \\
&=\mathbb{E}_{p(\boldsymbol{u})p(\boldsymbol{v})}\left[\log\frac{p(\boldsymbol{v})}{q_{\omega}(\boldsymbol{v}|\boldsymbol{u})}\right]-\mathbb{E}_{p(\boldsymbol{u},\boldsymbol{v})}\left[\log\frac{p(\boldsymbol{v}|\boldsymbol{u})}{q_{\omega}(\boldsymbol{v}|\boldsymbol{u})}\right] \notag \\
&=D_{\mathrm{KL}}\left(p(\boldsymbol{u})p(\boldsymbol{v})\|q_{\omega}(\boldsymbol{v}|\boldsymbol{u})p(\boldsymbol{u})\right)  \notag \\&\quad-D_{\mathrm{KL}}\left(p(\boldsymbol{u},\boldsymbol{v})\|q_{\omega}(\boldsymbol{v}|\boldsymbol{u})p(\boldsymbol{u})\right).
\end{align} 
According to the condition, the second KL-divergence term is no larger than the first.
Therefore, $\Delta \ge 0$, which implies
\begin{equation}
I(\boldsymbol{y}_n^{t};\boldsymbol{y}_m^{t})\le I_{\mathrm{vCLUB}}(\boldsymbol{y}_n^{t};\boldsymbol{y}_m^{t}).
\end{equation}
This completes the proof.
\end{proof}

Based on Theorem \ref{theorem1}, the cross-view redundancy reduction loss is approximated as
\begin{equation} \label{L_red} 
\sum_{n=1}^{N} \sum_{m=n+1}^{N} I (\boldsymbol{y}_n^{t};\boldsymbol{y}_m^{t}) \le \sum_{n=1}^{N} \sum_{m=n+1}^{N} I_{\mathrm{vCLUB}} (\boldsymbol{y}_n^{t};\boldsymbol{y}_m^{t})  \triangleq \mathcal{L}_{\mathrm{red}}.  
\end{equation}
By combining (\ref{L_task}) and (\ref{L_red}), the adversarial semantic-channel training objective in (\ref{L_D}) can be rewritten as
\begin{equation}\label{L_D_2}
\begin{aligned}
\mathcal{L}_D
\le 
\bar{\mathcal{L}}_D(\boldsymbol{\Theta},\beta) = 
\mathcal{L}_K(\boldsymbol{\Theta},\beta)
-
\mathcal{L}_{\mathrm{adv}}(\boldsymbol{\Theta})
\\
&\hspace{-3.6cm} 
+  
\mu ( \mathcal{L}_{\mathrm{task}}(\boldsymbol{\Theta}) + \gamma \mathcal{L}_{\mathrm{red}}(\boldsymbol{\Theta}) ).
\end{aligned}
\end{equation} 
For a practical learning, Algorithm \ref{alg1} will use the new loss function in (\ref{L_D_2}), instead of (\ref{L_D}) for the parameter training, which minimizes the upper bound of the original loss.


\subsection{Privacy Protection Module}

Given that all components in the privacy detection stage have been trained, we next focus on the privacy protection stage, which aims to minimize the image reconstruction distortion at the server as formulated in (\ref{O_s}). 
Specifically, once sensitive content is identified, the corresponding regions are removed from the captured images by the privacy segmentation module, producing sanitized images for transmission. 
The privacy protection stage then consists of two key components: a source-channel image transceiver that efficiently delivers the sanitized images to the server, and a VLM-based recovery module that reconstructs the removed content. 
The following subsections describe these two components in detail.


\subsubsection{Source-Channel Image Encoder and Decode}

The source-channel image encoder and decoder are implemented based on the DeepJSCC framework \cite{bourtsoulatze2019deep}, which adopts a fully convolutional autoencoder architecture for end-to-end wireless image transmission. Specifically, the encoder directly maps the sanitized image $\boldsymbol{A}_n$ into complex-valued channel symbols. The communication channel is modeled as a non-trainable differentiable layer within the network. 
At the receiver side, the decoder reconstructs the sanitized image $\boldsymbol{\hat{A}}_n$ from the received signal through a series of transpose convolutional layers. 
The encoder and decoder are jointly trained to minimize the reconstruction distortion of the sanitized image, and the loss function is defined as 
\begin{equation}
\mathcal{L}_S(\boldsymbol{\eta}, \psi)
=
\frac{1}{N}
\sum_{n=1}^{N}
\left\|
\boldsymbol{A}_n - \hat{\boldsymbol{A}}_n(\eta_n, \psi)
\right\|_2^2.
\label{LS}
\end{equation}
The training procedure is summarized in Algorithm~\ref{alg_deepjscc}.

\begin{algorithm}[t]
\caption{\label{alg_deepjscc}Train the Source-Channel Image Transceiver}
\begin{algorithmic}[1]\small

\State \textbf{Input}: 
$\mathcal{S}$, $\hat{i}$, 
$\tau_{\mathrm{thre}}^{(s)}$ and $P_{\max}^{(s)}$.

\While{$\mathcal{L}_S$ in (\ref{LS}) has not converged}

        \State Sample a multi-view image set
        $\{\boldsymbol{S}_n\}_{n=1}^{N}$ from $\mathcal{S}$.
        
        \For{$n=1,\ldots,N$}
        
            \State \textbf{Privacy Segmentation:}
            
            \State \hspace{0.5cm}
            $f_{\mathrm{SEG}}(\boldsymbol{S}_n| \hat{i})
            \rightarrow \boldsymbol{A}_n$.
            
            \State \textbf{Transmitter:}
            
            \State \hspace{0.5cm}
            $E_{\eta_n}
            (\boldsymbol{A}_n|
            \tau_{\mathrm{thre}}^{(s)},
            P_{\max}^{(s)})
            \rightarrow
            \boldsymbol{x}_n^{s}$.
            
            \State \hspace{0.5cm}
            Transmit $\boldsymbol{x}_n^{s}$ over the channel.
            
            \State \textbf{Receiver:}
            
            \State \hspace{0.5cm}
            Receive $\boldsymbol{y}_n^{s}$ via (\ref{ys}).
            
            \State \hspace{0.5cm}
            $D_{\psi}(\boldsymbol{y}_n^{s})
            \rightarrow
            \hat{\boldsymbol{A}}_n$.
        
        \EndFor
        
        \State Compute the loss $\mathcal{L}_S$ by (\ref{LS}).
        
        \State Update $\boldsymbol{\eta}$ and $\psi$ using gradient descent.

\EndWhile

\State \textbf{Output}: $\{E_{\eta_n}(\cdot)\}_{n=1}^{N}$ and $D_{\psi}(\cdot)$.

\end{algorithmic}
\end{algorithm}


\subsubsection{Privacy Recovery Module}

\begin{figure}[t] 
    \centering
    \includegraphics[width=0.48\textwidth]{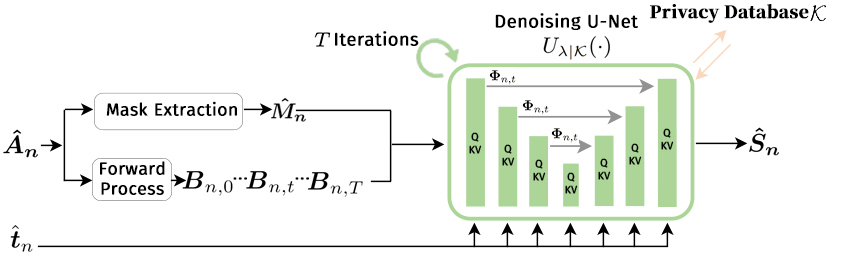}\vspace{-0.2cm}
    \caption{\label{fig_pr}Privacy Recovery Module} \vspace{-0.6cm}
\end{figure}

The privacy recovery module incorporates  a VLM-based diffusion framework to regenerate the removed sensitive content, as shown in Fig. 5. 
First, a binary mask \(\boldsymbol{\hat{M}_n} \in \{0,1\}^{H \times W} \) is obtained to detect the boundaries of the privacy region in $\boldsymbol{\hat{A}_n}$.

In parallel, the received image $\boldsymbol{\hat{A}_n}$ goes through a forward diffusion process, where Gaussian noise $\boldsymbol{\epsilon} \sim \mathcal{N}(\boldsymbol{0}, \mathbf{I})$ is gradually added over $T$ timesteps onto the image. 
At each step $t=0,\cdots, T$, the noisy latent variable is computed as:  
\begin{equation}
  \boldsymbol{B}_{n,t}   =   \sqrt{\bar{\alpha}}_t \cdot\boldsymbol{\hat{A}_{n}} + \sqrt{1 - \bar{\alpha}_t} \cdot \boldsymbol{\epsilon}  ,
\end{equation}
where  $ \bar{\alpha}_t \ge 0 $ is time-dependent hyperparameter that decreases with $t$. At $t=0$, $\bar{\alpha}_0$ equals to 1, so $\boldsymbol{{B}}_{n,0}=\boldsymbol{\hat{A}_n}$ contains no noise, and as $t$ increases, noise is progressively added, producing a sequence of latent variables $\{ \boldsymbol{B}_{n,t} \}_{t=0}^T$.

Next, a denoising U-Net $U_{\lambda|\mathcal{K}}(\cdot)$ is employed to reconstruct  the removed privacy content. 
The model parameters $\lambda$ are fine-tuned on the privacy database $\mathcal{K}$ via transfer learning, as summarized in Algorithm \ref{alg2}. 
The inputs to $U_{\lambda|\mathcal{K}}(\cdot)$ include the binary mask $\boldsymbol{\hat{M}_n}$ and the noisy latent sequence $\{ \boldsymbol{B}_{n,t} \}_{t=0}^T$  generated from the forward diffusion process. Meanwhile, the textual description $\hat{\boldsymbol{{t}}}_n$  provides conditional vision-language context to guide the image generation and ensure semantic consistency within the masked region.


The denoising process starts from timestep $t=T$, where the input $\boldsymbol{B}_{n,T}$ is processed by the U-Net to produce the intermediate output $\boldsymbol{C}_{n,T-1} = U_{\lambda|\mathcal{K}}(\boldsymbol{B}_{n,T}| \hat{\boldsymbol{{t}}}_n)$. 
The  generated content is then masked to retain only the privacy region, while the non-privacy background is filled using pixels from the corresponding noisy image $\boldsymbol{B}_{n,T-1}$, i.e.,
\begin{equation}
    \boldsymbol{D}_{n,T-1} =  \underbrace{\boldsymbol{C}_{n,T-1} \odot \boldsymbol{\hat{M}_n}}_{\text{\small Masked-region generation}} + \underbrace{\boldsymbol{B}_{n,T-1} \odot (\mathbf{1}- \boldsymbol{\hat{M}_n})}_{\text{\small Background retention}},
\end{equation}
which serves as the input for the next time step $t=T-1$. 
This process is repeated iteratively for  $t=T-1,\cdots,1$ via 
\begin{equation}
    \boldsymbol{D}_{n,t-1} = U_{\lambda|\mathcal{K}}(\boldsymbol{D}_{n,t}| \hat{\boldsymbol{{t}}}_n) \odot \boldsymbol{\hat{M}_n}  +  \boldsymbol{B}_{n,t-1} \odot (\mathbf{1}- \boldsymbol{\hat{M}_n}). 
\end{equation} 
The finial output $\boldsymbol{D}_{n,0}$ corresponds to the regenerated image $\boldsymbol{\hat{S}_n}$, with the removed privacy content semantically reconstructed by the receiver.

Furthermore, to enable conditional generation of the removed content based on the textual description $\hat{\boldsymbol{{t}}}_n$, the U-Net $U_{\lambda|\mathcal{K}}(\cdot)$ is augmented  with a cross-attention mechanism \cite{vaswani2017attention}.   
In each cross-attention layer, the query $\mathbf{Q}$ is computed from intermediate feature maps of the U-Net, while the key $\mathbf{K}$ and value $\mathbf{V}$ are derived from the text embedding of $\hat{\boldsymbol{{t}}}_n$: 
\begin{align}
\text{Attention}(\mathbf{Q}, \mathbf{K}, \mathbf{V}) &= 
\mathrm{softmax}\left( \frac{\mathbf{Q} \mathbf{K}^{T}}{\sqrt{\nu}} \right) \mathbf{V}, 
\end{align}
\begin{align}
\mathbf{Q} = \mathbf{W}_Q \cdot \boldsymbol{\Phi}_{n,t}, \quad 
\mathbf{K} = \mathbf{W}_K \cdot \hat{\boldsymbol{{t}}}_n, \quad 
\mathbf{V} = \mathbf{W}_V \cdot \hat{\boldsymbol{{t}}}_n, 
\end{align}
\noindent
where $\boldsymbol{\Phi}_{n,t}$ denotes the output of $n$-th intermediate layer within the U-Net at timestep $t$,  $\mathbf{W}_Q$, $\mathbf{W}_K$, $\mathbf{W}_V$ are learnable projection matrices, and $\nu$ is the scaling factor. The overall training procedure is provided in  Algorithm \ref{alg3}.

















\begin{algorithm}[t]
\caption{\label{alg2}Transfer Learning Based on Privacy Database $\mathcal{K}$}
\begin{algorithmic}[1]\small
\Statex \textbf{Initialization:} Load the pre-trained model $U_{\lambda}(\cdot)$
\State \textbf{Input:} $\mathcal{K} = \left\{ \mathcal{S}^{*}_i \right\}_{i=1,\cdots,K}$

\For{$i = 1, \cdots, K$}
    \State Sample a image set $\mathcal{S}^{*}_i = \{ \boldsymbol{S}^{*}_{i,m} \}_{m=1,\cdots,M}$ from $\mathcal{K}$
    \For{$m = 1, \cdots, M$}
        \State $f_{\mathrm{BLIP}}(\boldsymbol{S}^{*}_{i,m}) \rightarrow \boldsymbol{t}^{*}_{i,m}$
        \State Forward diffusion on $\boldsymbol{S}^{*}_{i,m}\rightarrow\{ \boldsymbol{B}_{t,i,m} \}_{t=0}^T$  
        \For{$t = T, \cdots, 1$}
            \State $U_{\lambda}(\boldsymbol{B}_{t,i,m}| \boldsymbol{t}^{*}_{i,m})\rightarrow\boldsymbol{C}_{t-1,i,m}$
            \State Compute  $\mathcal{L}_{t,i,m} = \| \boldsymbol{C}_{t-1,i,m} - \boldsymbol{B}_{t-1,i,m} \|^2$
            \State Update $\lambda$ using gradient descent on loss $\mathcal{L}_{t,i,m}$
        \EndFor
    \EndFor
\EndFor

\State \textbf{Output:} $U_{\lambda | \mathcal{K}}(\cdot)$

\end{algorithmic}
\end{algorithm}\normalsize

\begin{algorithm}[t]
\caption{\label{alg3}Train the Proposed Framework}
\begin{algorithmic}[1]\small

\State \textbf{Input}: 
$\mathcal{S}$, 
$\mathcal{K}$, 
$I_{\boldsymbol{\phi}}(\cdot)$,
$q_{\boldsymbol{\omega}}(\cdot)$,
$\tau_{\mathrm{thre}}^{(t)}$,
$\tau_{\mathrm{thre}}^{(s)}$,
$P_{\max}^{(t)}$,
$P_{\max}^{(s)}$,
and $\epsilon_k$.


\State Train the privacy detection module via Algorithms \ref{alg1} and \ref{alg_reid}.

\State Train the privacy recovery module via Algorithms \ref{alg_deepjscc} and \ref{alg2}.

\While{the overall objective in (\ref{equ_obj1}) has not converged}

    \State Sample a multi-view image set
    $\{\boldsymbol{S}_n\}_{n=1}^{N}$ from $\mathcal{S}$.

    \For{$n=1,\ldots,N$}

        \State Execute the privacy detection.
            \If{a privacy entity is detected}

                \State Execute the privacy protection stage.

            \Else

                \State Directly transmit the original images to the server.

            \EndIf

    \EndFor

   \State Compute the objective in (\ref{equ_obj1}).

    \State Update
    $\boldsymbol{\Theta}$, $\beta$, $\boldsymbol{\eta}$, $\psi$, and $\lambda$.

\EndWhile

\State \textbf{Output}: 
$\{T_{\boldsymbol{\theta}_n}(\cdot)\}_{n=1}^{N}$, 
$C_{\boldsymbol{\beta}}(\cdot)$, 
$\{E_{\boldsymbol{\eta}_n}(\cdot)\}_{n=1}^{N}$, $D_{\boldsymbol{\psi}}(\cdot)$,  $U_{\lambda|\mathcal{K}}(\cdot)$

\end{algorithmic}
\end{algorithm}

\section{Simulation Results and Analysis}\label{simulation}

To evaluate the  proposed privacy-preserving semantic framework, we use the multi-view dataset in \cite{wu20153dshapenets}, which contains  $3,926$ objects of $13$ categories, captured from $M$ different viewpoints. 
In the simulations, we set $M=3$, i.e., each object is observed from three viewpoints, with each viewpoint captured by a distinct edge device.  
To improve the realism of the dataset, diverse background environments and lighting conditions are generated using Blender \cite{blender}. 
For the text generation module, we adopt the pretrained Qwen2.5-VL model\cite{bai2025qwen25vl}, which employs a vision-language architecture to convert images into corresponding textual descriptions.  
For the ESCT transceiver, we adopt the pretrained T5-small model \cite{raffel2020t5} as the semantic text encoder and decoder, where the T5 encoder extracts semantic representations from the input sentence, and the T5 decoder reconstructs the sentence from the recovered representations. 
The ESCT encoder and decoder, as well as the source-channel image transceiver, are trained in an end-to-end manner over an AWGN channel with an  SNR of $12$ dB and communication threshold of $\epsilon_k=138$. 
Throughout the training and testing stages, the transmitted symbols are normalized as $P_{\max}^{(t)}=P_{\max}^{(s)}=1$ Watt. 
During testing, unless specified, the SNR thresholds are $\tau_{\mathrm{thre}}^{(t)}=\tau_{\mathrm{thre}}^{(s)}=0$ dB. 
For the Text-based Privacy Identification Module,  the pretrained all-MiniLM-L6-v2 model\cite{MiniLM} is adopt, where input text messages are projected into a $384$-dimensional identity embedding space.
The privacy database is constructed by randomly selecting 30 sensitive images from the training set. 
For the privacy segmentation module, binary masks of the target objects are generated during the blender rendering process and used for privacy removal. 
For the privacy recovery module, the stable diffusion model \cite{rombach2022high} is fine-tuned on the privacy dataset to reconstruct the removed private content, conditioned on text embeddings. 
The adversary is equipped with a similar decoder architecture and the same pre-trained Stable Diffusion model but has no access to the privacy database. After decoding the received signal, the adversary attempts to recover the removed private content using only the received data and its local generative model.

Four scenarios are considered in the performance evaluation of the proposed approach.   
Server-Full and Adversary-Full evaluate the reconstruction quality of the entire image at the authorized server and unauthorized receiver, respectively, 
while Server-Privacy and Adversary-Privacy evaluate the reconstruction quality in the sensitive region of the transmitted image at the authorized server and unauthorized receiver, respectively.  
To further compare the proposed approach with the state-of-the-art method, the DeepJSCC model \cite{bourtsoulatze2019deep} is adopted as the baseline, where the original image is transmitted directly without any privacy protection, and both authorized receiver and  adversary try to reconstruct the full image. For a fair comparison, the compression ratio, defined as the ratio of the dimension of the transmitted signal to that of the input image, is set to $1/48$ for both the proposed  approach and the DeepJSCC baseline. 

\begin{figure}[t]
    \begin{center}
        \begin{subfigure}{0.679\columnwidth}
            \centering
            \includegraphics[width=\linewidth]{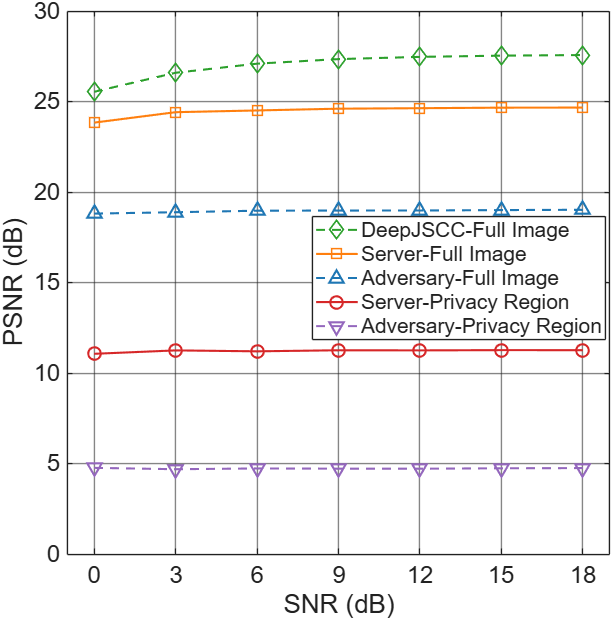}
            \caption{\label{fig_psnr}}
        \end{subfigure} 
        \begin{subfigure}{0.7\columnwidth}
            \hspace{-0.21cm}
            \includegraphics[width=\linewidth]{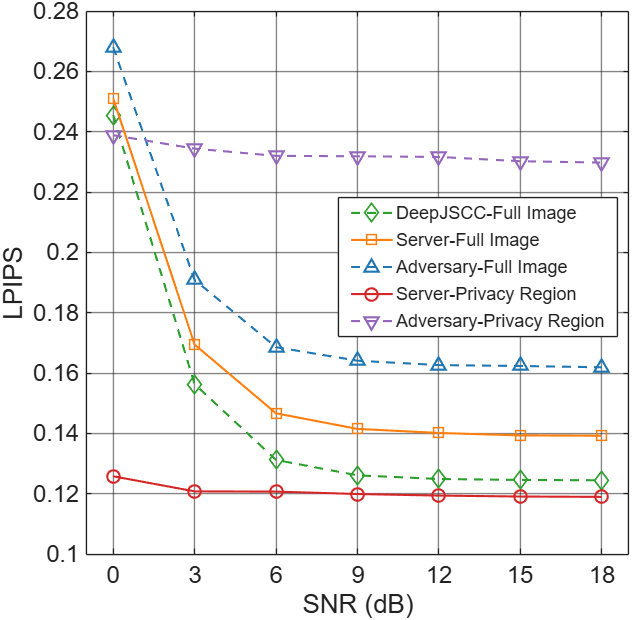}
            \caption{\label{fig_lpips}}
        \end{subfigure}
        \caption{\small
        (a) PSNR and (b) LPIPS of the original and reconstructed images at the server and the adversary.
        }\vspace{-0.65cm}
        \label{fig1}
    \end{center}
\end{figure}

As shown in Fig.~\ref{fig_psnr}, the peak signal-to-noise ratio (PSNR) is used to quantify the reconstruction quality, where a higher PSNR indicates greater pixel-level fidelity.  
The PSNR between the original image $\boldsymbol{S}$ and the reconstructed image $\hat{\boldsymbol{S}}$ in dB is 
\begin{equation}\small
\text{PSNR}(\boldsymbol{S},\hat{\boldsymbol{S}}) = 10 \log_{10} \left( \frac{X_{\max}^2}{\text{MSE}(\boldsymbol{S},\hat{\boldsymbol{S}})} \right)  ,
\end{equation}\normalsize
where $X_{\max}$ is the maximum pixel value and MSE denotes the mean squared error. 
Fig.~\ref{fig_psnr} shows that the PSNR of the reconstructed full image at the server increases with the received SNR, while that at the adversary remains nearly unchanged. This difference arises from their distinct capabilities in image reconstructions.
At the server, better channel condition improves the quality of received image  $\hat{\boldsymbol{A}}_n$, as well as the reliability of text transmission and physical-layer key generation, which jointly leads to a better 
reconstruction quality. 
In contrast, although the information received by the adversary improves with better channel conditions, it cannot access the privacy database, and thus its reconstruction of the privacy region shows little improvement. Consequently, the server consistently achieves a PSNR improvement of more than $6$ dB over the adversary. 
%
%
Compared with the DeepJSCC baseline, the proposed framework yields a lower PSNR by around $2$ dB, reflecting the tradeoff between reconstruction fidelity and privacy protection.  
Fig. \ref{fig_psnr} further shows that the PSNR within the privacy region remains nearly constant as the channel SNR increases for both the authorized server and the adversary. 
This is because the masked region contains no visual information, and thus improving the channel quality provides almost no benefits for image reconstruction in this region. 
Moreover, the PSNR within the privacy region is lower than that of the full image for both the server and the adversary, indicating the greater difficulty of pixel-level reconstruction in the masked area. 
Nevertheless, by leveraging the privacy database, the authorized server achieves a PSNR gain of more than $6$ dB over the adversary in the sensitive region, demonstrating the effectiveness of the proposed   framework. 

In Fig.~\ref{fig_lpips}, we further adopt the Learned Perceptual Image Patch Similarity (LPIPS) metric \cite{zhang2018unreasonable}  to evaluate the perceptual quality of the reconstructed images.  
Using a pretrained AlexNet model \cite{krizhevsky2012imagenet}, LPIPS measures perceptual similarity in the deep feature space rather than at the pixel level, by
\begin{equation}\small
\text{LPIPS}(\boldsymbol{S},\hat{\boldsymbol{S}})
=
\sum_{l} \frac{1}{H_lW_l} \sum_{h,w}
\left\|\boldsymbol{\delta}_{l}\odot
\left(f_{\mathrm{Alex}}^{\,l}(\boldsymbol{S})
-
f_{\mathrm{Alex}}^{\,l}(\hat{\boldsymbol{S}})
\right)\right\|_2^2 ,
\end{equation}\normalsize
where $f_{\mathrm{Alex}}^{\,l}(\cdot)$ denotes the normalized feature map extracted from the $l$-th layer of AlexNet, and $\boldsymbol{\delta}_l$ is the learned channel-wise weighting vector for that layer.
A lower LPIPS value corresponds to higher perceptual similarity, 
as it is computed from weighted feature distances.
Unlike PSNR, LPIPS aligns more closely with human visual perception. 
Fig. \ref{fig_lpips} shows that the LPIPS of
the reconstructed full images at both the server and adversary decrease with increasing channel SNR, as improved channel conditions enable more reliable recovery of semantic and visual information. 
As a result, the recovered image generated by the server in the proposed approach consistently achieves a lower LPIPS value by approximately $12\%$ than the adversary. 
Fig. \ref{fig_lpips} further shows that  LPIPS within the privacy region remains nearly unchanged as the  SNR increases at both the authorized server and the adversary, since the masked region carries almost no information. 
The reconstructed image generated by the authorized server  achieves the lowest LPIPS within the privacy region (Server-Privacy), indicating superior semantic consistency. 
Since the sensitive region is regenerated by the diffusion model rather than directly reconstructed from the received signal,  it is less susceptible to channel noise while better preserving semantic meaning. 
Moreover, by leveraging the privacy database, the authorized server (Server-Privacy) achieves a lower LPIPS by $48\%$ than the adversary (Adversary-Privacy) within the privacy region. 
Compared with the DeepJSCC baseline, although the proposed framework incurs a slight increase in the LPIPS of the full image, it significantly improves the perceptual quality on the reconstructed privacy region while effectively preventing privacy leakage, demonstrating a favorable tradeoff between semantic fidelity and privacy preservation.

\begin{figure*}[t]
	\begin{center} \vspace{-0.0cm}
		\begin{subfigure}{.32\textwidth}
			\centering
			\includegraphics[width=5.83cm]{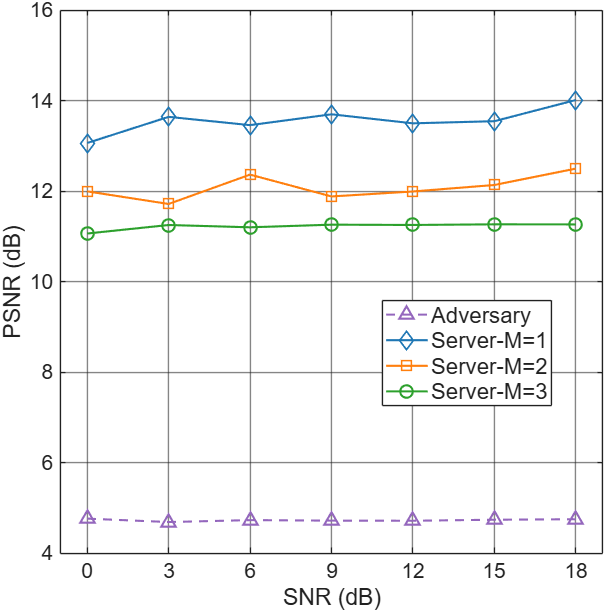}
			\caption{\label{fig_psnr_m} }
		\end{subfigure}
		\begin{subfigure}{.32\textwidth}
			\centering
			\includegraphics[width=5.99cm]{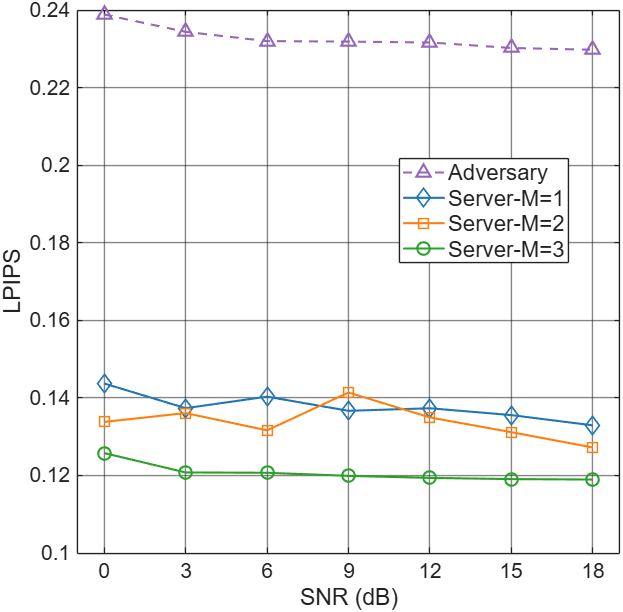}
			\caption{\label{fig_lpips_m} }
		\end{subfigure}
		\begin{subfigure}{.32\textwidth}
			\centering
			\includegraphics[width=5.83cm]{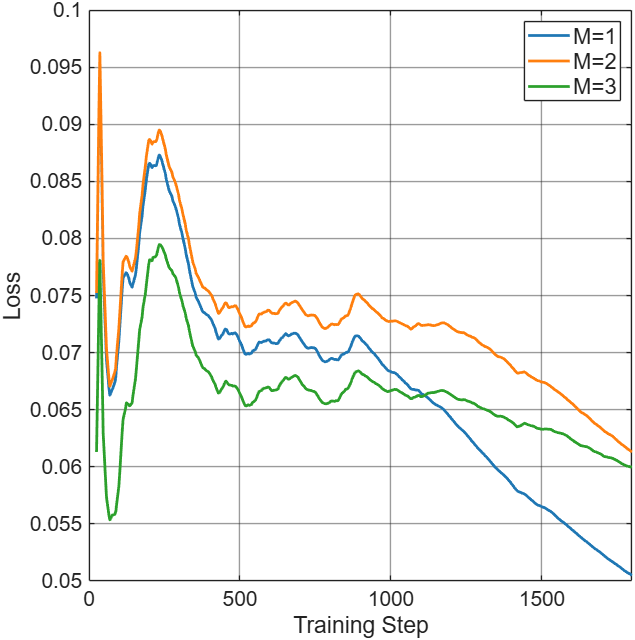}
			\caption{\label{fig_loss}  }
		\end{subfigure}
		\vspace{-0.0cm}
		\caption{\small{\label{fig2} 
        Effect of privacy database size on privacy-region reconstruction quality: (a) PSNR, (b) LPIPS, and (c) training loss.
            }
		}
	\end{center}
	\vspace{-0.2cm}
\end{figure*}

In Fig.~\ref{fig2}, we evaluate the impact of the privacy database size by varying the number of reference images, denoted by $M$, for each privacy entity in $\mathcal{K}$ during Stable Diffusion fine-tuning in Algorithm~\ref{alg2}. 
For example, $M=3$ corresponds to to three reference images captured from the front, side, and top views. 
For comparison, the adversary has no access to the privacy database $\mathcal{K}$, corresponding to $M=0$. 
As shown in Fig.~\ref{fig_psnr_m}, increasing $M$ from $1$ to $3$ slightly reduces the PSNR by approximately $1$ dB, since multiple viewpoints make pixel-wise reconstruction more challenging. 
In contrast, Fig.~\ref{fig_lpips_m} shows that the LPIPS decreases as  $M$ increases, indicating improved perceptual quality due to richer semantic information from multiple viewpoints. 
These results reveal a tradeoff between pixel-level fidelity and perceptual quality: incorporating additional viewpoints slightly degrades pixel-wise reconstruction accuracy but significantly improves perceptual consistency. 
More importantly, the server consistently outperforms the adversary across all values of $M$, demonstrating the effectiveness of the proposed privacy database $\mathcal{K}$.




Fig.~\ref{fig_loss} shows the training loss of Algorithm~\ref{alg2} during the privacy recovery optimization with a learning rate of $1\times10^{-6}$. 
Regardless of the viewpoint number $M$,  the proposed framework eventually converges after sufficient training. 
Fig.~\ref{fig_loss} first shows that a smaller $M$ leads to a lower final training loss, as a simpler training dataset is easier for the model to fit. 
However, a lower training loss does not guarantee a better perceptual reconstruction. 
The limited view diversity reduces the model's generalization capability, resulting in lower visual quality as shown in Fig.~\ref{fig_lpips_m}. 
In contrast, multi-view reference images provide richer semantic information during fine-tuning and a better perceptual reconstruction result. 
Therefore, increasing view diversity improves semantic fidelity, even though it slightly compromises pixel-level accuracy and training speed.

\begin{figure*}[t]
	\begin{center} \vspace{-0.0cm}
		\begin{subfigure}{.32\textwidth}
			\centering
			\includegraphics[width=5.7cm]{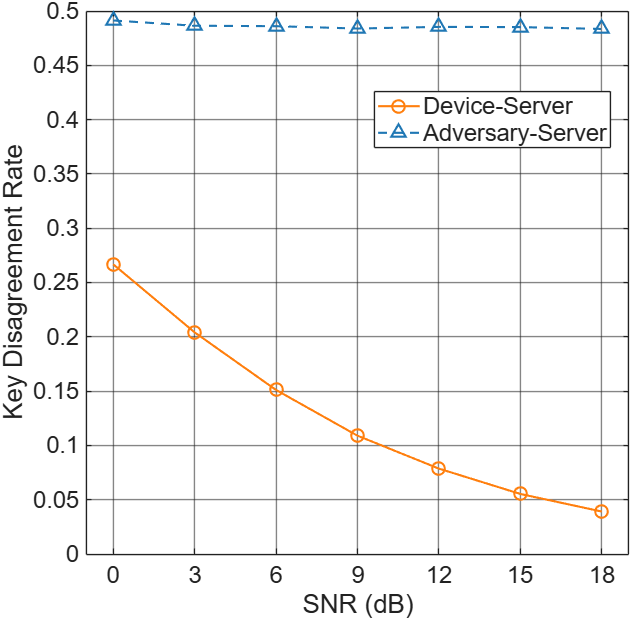}
			\caption{\label{fig_key} }
		\end{subfigure}
		\begin{subfigure}{.32\textwidth}
			\centering
			\includegraphics[width=5.58cm]{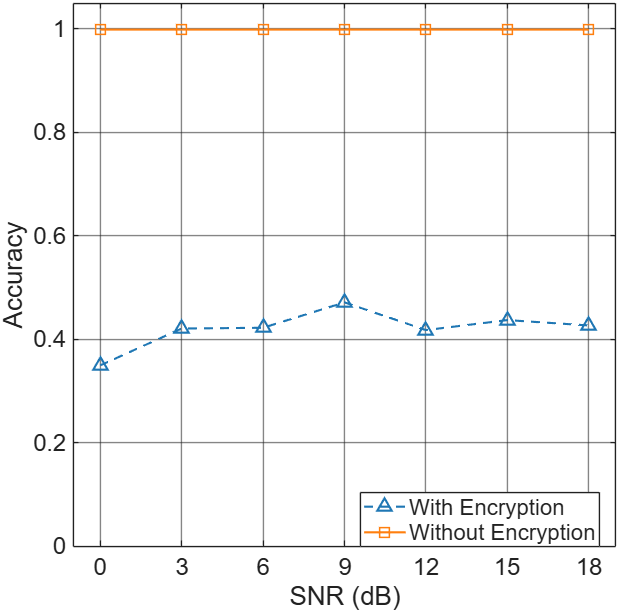}
			\caption{\label{fig_id} }
		\end{subfigure}
		\begin{subfigure}{.32\textwidth}
			\centering
			\includegraphics[width=6cm]{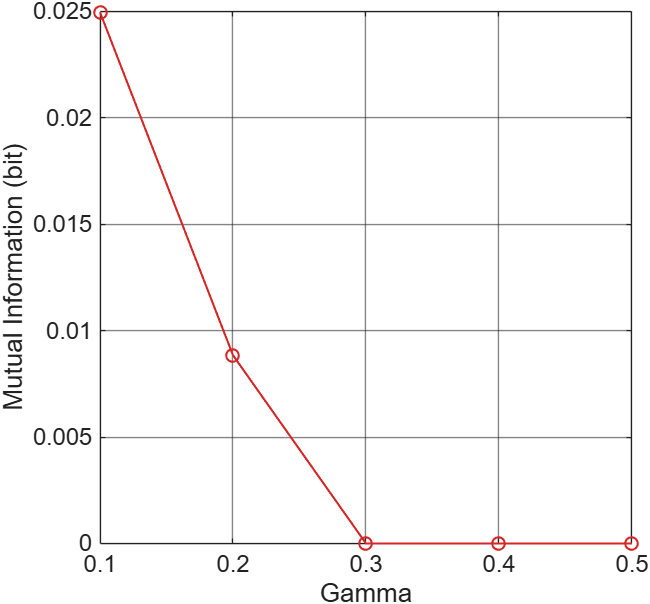}
			\caption{\label{fig_dsib}  }
		\end{subfigure}
		\vspace{-0.0cm}
		\caption{\small{\label{fig3} 
        (a) Key disagreement rate between the device and the server, and between the adversary and the server under different SNR. (b) Privacy detection accuracy at the adversary with and without encryption under different SNR. (c) Cross-view mutual information under different values of $\gamma$.   
            }
		}
	\end{center}
	\vspace{-0.2cm}
\end{figure*}

In Figs.~\ref{fig_key} and~\ref{fig_id}, we investigate the impact of the encryption text transmission on the privacy detection stage.   
Specifically, the key disagreement rate (KDR) is introduced to evaluate the bit-wise discrepancy between the key $\boldsymbol{k}^\text{dev}$ generated at the device and $\boldsymbol{k}^\text{srv}$ at the server,  by 
\begin{equation}\small
\mathrm{KDR}(\boldsymbol{k}^\text{dev},\boldsymbol{k}^\text{srv})
=
\frac{1}{d}\sum_{j=1}^{d}\mathbf{1}
    \left(
            {k}_j^\text{dev}
            \neq 
            {k}_j^\text{srv}
    \right),
\label{kdr}
\end{equation}\normalsize
where $k_{j}^{\text{dev}}$ and $k_{j}^{\text{srv}}$ denote the $j$-th element of corresponding keys,  
and the indicator function $\mathbf{1}(\cdot)$ outputs $1$ if  two elements equal and $0$ otherwise. A lower KDR indicates higher key consistency. 
In Fig. \ref{fig_key}, the KDR  between the device and the server decreases as the SNR increases. This is because a higher SNR results in better quality of the physical channel measurement, thereby improving the key generation accuracy.  
Meanwhile, we assume that the adversary intercepts the strongest pilot signals between the device and the server, so as to estimate the CSI and generate a secret key using the same function $f_{\mathrm{PLK}}(\cdot)$.
As shown in Fig. \ref{fig_key}, the adversary achieves a much higher KDR than the legitimate users, and the resulting KDR stays close to $0.5$ across all SNR, indicating that its generated keys are nearly random and statistically independent of the legitimate keys. 
This is because the adversary cannot exploit the channel reciprocity of the legitimate pair. 
These results demonstrate that the physical-layer key generation scheme achieves high key consistency between the legitimate uses while preventing adversary key recovery.

To further evaluate the privacy leakage during the text-transmission stage, we feed the intercepted text-only signal $\boldsymbol{y}_{a,n}$ to the adversary, and check if the privacy identification can be detected in the reconstructed message $D_{\beta_a}(\boldsymbol{y}_{a,n})$. 
As shown in Fig.~\ref{fig_id}, 
if no encryption is employed, the adversary achieves nearly $100\%$ privacy identification accuracy across all SNR levels, indicating that the transmitted text alone leaks sufficient semantic information to reveal the privacy identity. 
In contrast, enabling physical-layer encryption reduces the identification accuracy to below $50\%$. 
The results in Fig.~\ref{fig_id} further confirm the observations in Fig.~\ref{fig_key}, where a high KDR between the adversary and the legitimate nodes leads to incorrect text decryption, 
thereby effectively suppressing privacy leakage.  
These results demonstrate that the proposed physical-layer encryption effectively protects sensitive identity information during the text transmission stage, even against a strong model-aware adversary.


Fig.~\ref{fig_dsib} investigates the effect of the weighting factor $\gamma$ in (\ref{L_D_2}), which controls the trade-off between the cross-view mutual information minimization and other training purposes.  
A higher mutual information weight indicates that  more redundant semantic information will be removed from multiple devices' observation of the same event. 
Note that, in this calculation, the cross-view mutual information is estimated using the well-trained variational approximation $q_{\omega}$ with a neural-based approximation of the conditional distribution in (\ref{eq:vclub_bound}). 
Fig.~\ref{fig_dsib} shows that the cross-view mutual information decreases as $\gamma$ increases and approaches zero when $\gamma \ge 0.3$. 
This indicates that redundant semantic information across different viewpoints can be effectively removed by setting the weight $\gamma = 0.3$. 

\section{Conclusion}

In this paper, we have proposed a novel VLM-based framework for privacy-preserving semantic communications in wireless edge systems. 
A text-based privacy detection module has been developed to identify sensitive entities using a privacy database maintained exclusively at the server. 
To protect the textual information, we further developed an encrypted semantic-channel encoder/decoder with a practical a practical secret-key establishment mechanism. 
Simulation results show that  the keys generated from physical-layer characteristics exhibit a low mismatch rate between the device and server, while those obtained by the adversary remain effectively random. 
The proposed method also reduces textual privacy leakage by more than $50\%$ compared with unprotected transmission.  
For image transmission, sensitive regions identified by the privacy detection module are removed at the edge device and reconstructed at the server using the VLM and prior knowledge stored in the privacy database. 
Even with access to comparable model parameters, the adversary cannot accurately recover the sensitive content without access to the privacy database, resulting in a $48\%$ reconstruction-quality advantage at the server over the model-aware adversary. 
Furthermore, the proposed framework  effectively suppresses cross-device semantic redundancy,
with the estimated mutual information between transmitted representations approaching $0$ bit, indicating that redundant semantic information is rarely transmitted across devices.

\bibliographystyle{IEEEtran}
\bibliography{references}

\end{document}